\renewcommand\footnotetextcopyrightpermission[1]{}
\def\BibTeX{{\rm B\kern-.05em{\sc i\kern-.025em b}\kern-.08emT\kern-.1667em\lower.7ex\hbox{E}\kern-.125emX}}
\newtheorem{theorem}{Theorem}[section]
\newtheorem{lemma}[theorem]{Lemma}
\newtheorem{invariant}[theorem]{Invariant}
\newcommand{\fixed@sra}{$\vrule height 2\fontdimen22\textfont2 width 0pt\shortrightarrow$}
\newcommand{\shortarrow}[1]{%
  \mathrel{\text{\rotatebox[origin=c]{\numexpr#1*45}{\fixed@sra}}}
}
\newcommand{\sidenote}[1]{\textbf{(*)}\marginpar {\tiny \raggedright{(*) #1}}}
\newcommand{\neighborsbounds}{\sqrt{2m}}
\newcommand{\Q}{\ensuremath{\mathcal{Q}}}
\newcommand{\lvl}[1]{\ensuremath{lvl(#1)}}
\newcommand{\outset}[1]{\ensuremath{Out_{#1}}}
\newcommand{\inset}[2]{\ensuremath{In_{#1}[#2]}}
\newcommand{\lowneighbors}[2]{\ensuremath{\Phi_{#1}(#2)}}
\newcommand{\freeschedule}[0]{\textit{free-schedule}}
\newcommand{\unmatchschedule}[0]{\textit{unmatch-schedule}}
\newcommand{\riseschedule}[0]{\textit{rise-schedule}}
\newcommand{\shuffleschedule}[0]{\textit{shuffle-schedule}}
\newcommand{\handlefree}[0]{\textit{handle-free}}
\newcommand{\setlevel}[0]{\textit{set-level}}
\newcommand{\getalive}{getAlive}
\newcommand{\getdegreeinmachine}{getDegInMachine}
\newcommand{\getsuspended}{getSuspended}
\newcommand{\movesuspended}{moveSuspended}
\newcommand{\fits}{fits}
\newcommand{\tofit}{toFit}
\newcommand{\updatevertex}{updateVertex}
\newcommand{\moveedges}{moveEdges}
\newcommand{\fetchmore}{fetchSuspended}
\newcommand{\updateMachine}{updateMachine}
\newcommand{\addedge}{addEdge}
\newcommand{\ignore}[1]{}
\begin{document}

\title{Dynamic Algorithms for the Massively Parallel Computation Model}
\author[G. F. Italiano]{Giuseppe F. Italiano }
\affiliation{LUISS University, Rome, Italy}
\author[S. Lattanzi]{Silvio Lattanzi }
\affiliation{Google Research, Z\"urich, Switzerland}
\author[V. S. Mirrokni]{Vahab S. Mirrokni }
\affiliation{Google Research, New York, USA}

\author[N. Parotsidis]{Nikos Parotsidis}
\affiliation{University of Copenhagen, Denmark}
\authornote{The author is supported by Grant Number 16582, Basic Algorithms Research Copenhagen (BARC), from the VILLUM Foundation. Work partially done while the author was an intern at Google.}

\begin{abstract}
The Massive Parallel Computing (MPC) model gained popularity during the last decade and it is now seen as the standard model for processing large scale data. One significant shortcoming of the model is that it assumes to work on static datasets while, in practice, real world datasets evolve continuously. To overcome this issue, in this paper we initiate the study of dynamic algorithms in the MPC model.
We first discuss the main requirements for a dynamic parallel model and we show how to adapt the classic MPC model to capture them. Then we analyze the connection between classic dynamic algorithms and dynamic algorithms in the MPC model. Finally, we provide new efficient dynamic MPC algorithms for a variety of fundamental graph problems, including connectivity, minimum spanning tree and matching.
\end{abstract}

 \begin{CCSXML}
<ccs2012>
<concept>
<concept_id>10003752.10003809.10003635.10010038</concept_id>
<concept_desc>Theory of computation~Dynamic graph algorithms</concept_desc>
<concept_significance>500</concept_significance>
</concept>
<concept>
<concept_id>10003752.10003809.10010172.10003817</concept_id>
<concept_desc>Theory of computation~MapReduce algorithms</concept_desc>
<concept_significance>500</concept_significance>
</concept>
<concept>
<concept_id>10003752.10003753.10003761.10003763</concept_id>
<concept_desc>Theory of computation~Distributed computing models</concept_desc>
<concept_significance>300</concept_significance>
</concept>
</ccs2012>
\end{CCSXML}

\ccsdesc[500]{Theory of computation~Dynamic graph algorithms}
\ccsdesc[500]{Theory of computation~MapReduce algorithms}
\ccsdesc[300]{Theory of computation~Distributed computing models}

\maketitle

%\clearpage
\pagenumbering{arabic} 

\section{Introduction}

Modern applications often require performing computations on massive amounts of data. Traditional models of computation, such as the RAM model or even shared-memory parallel systems, are inadequate for such computations, as the input data do not fit into the available memory of today's systems. The restrictions imposed by the limited memory in the available architectures has led to new models of computation that are more suitable for processing massive amounts of data. A model that captures the modern needs of computation at a massive scale is the Massive Parallel Computing (MPC) model, that is captured by several known systems (such as MapReduce, Hadoop, or Spark). At a very high-level, a MPC system consists of a collection of machines that can communicate with each other through indirect communication channels. The computation proceeds in synchronous rounds, where at each round the machines receive messages from other machines, perform local computations, and finally send appropriate messages to other machines so that the next round can start. The crucial factors in the analysis of algorithms in the MPC model are the number of rounds and the amount of communication performed per round.

The MPC model is an abstraction of a widely-used framework in practice and has resulted in an increased interest by the scientific community. An additional factor that contributed to the interest in this model is that MPC exhibits unique characteristics that are not seen in different parallel and distributed architectures, such as its ability to perform expensive local computation in each machine at each round of the computation. Despite its resemblance to other parallel models, such as the PRAM model, the MPC model has been shown to have different algorithmic power from the PRAM model~\cite{karloff2010model}.

The ability of the MPC model to process large amounts of data, however, comes with the cost of the use of large volumes of resources (processing time, memory, communication links) during the course of the computation. This need of resources strengthens the importance of efficient algorithms. Although the design of efficient algorithms for solving problems in the MPC model is of vital importance, applications often mandate the recomputation of the solution (to a given problem) after small modifications to the structure of the data. For instance, such applications include the dynamic structure of the Web where new pages appear or get deleted and new links get formed or removed, the evolving nature of social networks, road networks that undergo development and constructions, etc. In such scenarios, even the execution of very efficient algorithms after few modifications in the input data might be prohibitive due to their large processing time and resource requirements. Moreover, in many scenarios, small modifications in the input data often have a very small impact in the solution, compared to the solution in the input instance prior to the modifications. These considerations have been the driving force in the study of dynamic algorithms in the traditional sequential model of computation. 

Dynamic algorithms maintain a solution to a given problem throughout a sequence of modifications to the input data, such as insertions or deletion of a single element in the maintained dataset. In particular, dynamic algorithms are able to adjust efficiently the maintained solution by typically performing very limited computation. Moreover, they often detect almost instantly that the maintained solution needs no modification to remain a valid solution to the updated input data. The update time of a dynamic algorithm in the sequential model is the time required to update the solution so that it is a valid solution to the current state of the input data. Dynamic algorithms have worst-case update time $u(N)$ if they spend at most $O(u(N))$ after every update, and $u(N)$ amortized update bound if they spend a total of $O(k\cdot u(N))$ time to process a sequence of $k$ updates. The extensive study of dynamic algorithms has led to results that achieve a polynomial, and often exponential, speed-up compared to the recomputation of a solution from scratch using static algorithms, for a great variety of problems. For instance, computing the connected components of a graph takes $O(m+n)$ time, where $n$ and $m$ are the number of vertices and edges of the graph, respectively, while the most efficient dynamic algorithms can update the connected components after an edge insertion or an edge deletion in $O(\log n)$ amortized time~\cite{holm2001poly}, or in sub-polynomial time in the worst-case~\cite{lulli2017fast}. Similarly, there exist algorithms that can maintain a maximal matching in polylogarithmic time per update in the worst case~\cite{bernstein2019adeamortization}, while recomputing from scratch requires $O(m+n)$ time. 

So far, there has been very little progress on modelling dynamic parallel algorithms in modern distributed systems, despite their potential impact in modern applications, with respect to the speed-up and reduced use of resources. There have been few dynamic algorithms that maintain the solution to a problem in the distributed setting. For instance, in~\cite{censorhillel2016optimal}, Censor-Hillel et al. present a dynamic algorithm for maintaining a Maximal Independent Set of a graph in the LOCAL model. Assadi et al.~\cite{Assadi2018fully} improve the message complexity by adjusting their sequential dynamic algorithm to the LOCAL model. In~\cite{ahn2018access}, Ahn and Guha study problems that can be fixed locally (i.e., within a small neighborhood of some vertex) after some small modification that has a very limited impact on the existing solution. This line of work has been primarily concerned with minimizing the number of rounds and the communication complexity. Moreover, the algorithms designed for the LOCAL model do not necessarily take into account the restricted memory size in each machine. 

In this paper, we present an adaptation of the MPC model, that we call DMPC, that serves as a basis for dynamic algorithms in the MPC model. First, we impose a strict restriction on the availability of memory per machine, which mandates the algorithms in this model to operate in any system that can store the input in the total memory. Second, we define a set of factors that determine the complexity of a DMPC algorithm. These factors consist of (i) the number of rounds per update that are executed by the algorithm, (ii) the number of machines that are active per round, and (iii) the total amount of communication per round, which refers to the sum of sizes of all messages sent at any round. 
A final requirement for our model is that DMPC algorithms should provide worst-case update time guarantees. This is crucial not only because of the shared nature of the resources, but also because it  is imposed by many real-world applications, in which one needs to act fast upon an update in the data, such as detecting a new malicious behavior, or finding relevant information to display to a new activity (e.g., displaying ads, friend recommendations, or products that are relevant to a purchase).

Inspired by today's systems that share their resources between many different applications at any point in time, it is necessary to design algorithms that do not require dedicated systems to operate on, and that can be executed with limited amounts of resources, such as memory, processors, and communication channels. This necessity is further strengthened by the fact that  typically dynamic algorithms are required to maintain a solution to a problem over long series of updates, which implies that the application is running for a long sequence of time. Our model imposes these properties through the predefined set of restriction. In particular, we focus on three main dimensions

\paragraph{Memory.}
Dynamic algorithms in our model are required to use a very limited amount of memory in each machine. Specifically, assuming that the input is of size $N$, each machine is allowed to use only $O(\sqrt{N})$ memory. Note that this limitation does not aim at ensuring that the machines are large enough to fit $O(\sqrt{N})$ bits (as a system with such weak machines would need many millions of machines to even store the data, given that even weak physical machines have several GB of memory). Rather, it aims at guaranteeing that the allocation of the machines of the model to physical machines is flexible in terms of memory, allowing the system to move machines of the model across different physical machines without affecting the execution of the algorithm. (Notice that the system can co-locate several machines of the model to a single physical machine.)

\paragraph{Resource utilization and number of machines.} Our model promotes limited processing time in several ways. First, two factors of evaluation of an algorithm are the number of rounds that are required to process each update, and the number of machines that are active at each round of the update. Notice that machines that are not used by the execution of a dynamic algorithm can process other applications that co-exist in the same physical machines. 
Moreover, algorithms with worst-case update time are guaranteed to end the execution of a particular update in limited time, thus avoiding locking shared resources for large periods of time. 

\paragraph{Communication Channels.} In our model, one of the factors that contributes to the complexity of an algorithm is the amount of communication that occurs at each round during every update. 
Furthermore, the number of machines that are active per round also contributes to the complexity of an algorithm (namely, the number of machines receiving or transmitting messages).
These two facts ensure that efficient algorithms in the DMPC model use limited communication. 

\smallskip

Similarly to the sequential model, the goal of a dynamic %graph 
algorithm in the DMPC model is to maintain a solution to a problem more efficiently than recomputing the solution from scratch with a static algorithm. Here, the main goal is to reduce the bounds in all three factors contributing to the complexity of an algorithm. However, algorithms reducing some of the factors, without increasing the others, may also be of interest. 

We initiate the study of dynamic algorithms in the DMPC model by designing algorithms for basic graph-theoretic problems. In particular, we present fully-dynamic algorithms for maintaining a maximal matching, a $\nicefrac{3}{2}$-approximate matching, a $(2+\epsilon)$-approximate matching, and the connected components of an unweighted graph, as well as a $(1+\epsilon)$-approximate Minimum Spanning Tree (MST) of a weighted graph.

Finally, we show that our model can exploit successfully the  techniques that were developed for dynamic algorithms in the sequential model.  In particular, we present a black-box reduction that transforms any sequential dynamic algorithm with $p(S)$ preprocessing time and $u(S)$ update time to an algorithm in the dynamic MPC model which performs the preprocessing step in $O(p(S))$ rounds, uses $O(1)$ machines and $O(1)$ total communication per round, and such that each update is performed in $O(u(S))$ number or rounds using $O(1)$ machines and $O(1)$ total communication per round. With this reduction, the characteristics (amortized vs. worst-time and randomized vs. deterministic) of the  DMPC algorithm are the same as the  sequential algorithm.

\smallskip\noindent\textbf{Related work in the classic MPC model.}
It was known from the PRAM model how to compute a $(1+\epsilon)$ approximate matching in $O(\log n)$ rounds~\cite{lotker2008improved}.
Lattanzi et al. ~\cite{lattanzi2011filtering} introduced the so-called filtering technique which gives an algorithm for computing a 2-approximate matching in $O(1/c)$ rounds assuming that the memory per machine is $O(n^{1+c})$, for any $c>0$.
Under the same memory assumption, Ahn and Guha \cite{ahn2018access} showed an algorithm running in $O(1/(c \epsilon))$ number of rounds for $(1+\epsilon)$ approximate matching.
Both those algorithms run in $O(\log n)$ time when the memory in each machine is $\Theta(n)$, which matches the bound that was known from the PRAM model.
It was only recently that Czumaj et al. \cite{Czumaj2018round} overcame the $O(\log n)$ barrier for computing an approximate matching.
In particular, in \cite{Czumaj2018round} the authors presented a $(1+\epsilon)$-approximate matching in $O((\log \log n)^2)$ time with $\widetilde{O}(n)$ memory per machine.
%Very recently, 
This bound has been improved to $O(\log \log n)$ rounds, under the assumption of slightly superlinear memory per machine ~\cite{ghaffari2018improved,assadi2019coresets}. Very recently, Ghaffari and Uitto~\cite{ghaffari2019sparsifying} presented an algorithm that uses only sublinear memory and can compute a $(1+\epsilon)$-approximate matching in $\widetilde{O}(\sqrt{\log \Delta})$ rounds, where $\Delta$ is the maximum degree in the graph. 

Another central problem in the MPC model is the computation of the connected components of a graph.
This problem can be solved in $O(\log n)$ rounds \cite{lulli2017fast,lkacki2018connected}.
In particular, the algorithm in \cite{lkacki2018connected} runs in $O(\log \log n)$ rounds on certain types of random graphs.
In the case where each machine contains $O(n^{1+c})$ memory, it is known how to compute the connected components of a graph in a constant number of rounds \cite{lattanzi2011filtering}.
Under a well-known conjecture \cite{yaroslavtsev2017massively}, it is impossible to achieve $o(\log n)$ on general graphs if the space per machine is $O(n^{1-c})$ and the total space in all machines is $O(m)$. Very recently Andoni et al.~\cite{DBLP:conf/focs/AndoniSSWZ18} presented a new algorithm that uses sublinear memory and runs in $\tilde{O}(\log D)$ parallel rounds, where $D$ is the diameter of the input graph.

\smallskip\noindent\textbf{Our results.}
Throughout the paper we denote by $G=(V,E)$ the input graph, and we use $n=|V|$, $m=|E|$, and $N=m+n$. All bounds that are presented in this section are worst-case update bounds. Our algorithmic results are summarized in Table \ref{tab:algorithms}. 
All of our algorithms use $O(N)$ memory across all machines, and hence make use of $O(\sqrt{N})$ machines.

\begin{table*}[t!]
	\centering
	\caption{Algorithmic results achieved in this paper. The bounds presented in the first part of the table hold in the worst-case.}
	\label{tab:algorithms}
	\renewcommand{\arraystretch}{1.2}
	\begin{tabular}{lcccc}
		\multicolumn{1}{c|}{Problem}            & \#rounds       & \begin{tabular}[c]{@{}c@{}}\#active \\ machines\end{tabular} & \multicolumn{1}{c|}{\begin{tabular}[c]{@{}c@{}}Commun.\\ per round\end{tabular}} & Comments                                                                                              \\ \hline
		\multicolumn{1}{l|}{Maximal matching}   & $O(1)$         & $O(1)$                                                       & \multicolumn{1}{c|}{$O(\sqrt{N})$}      & \begin{tabular}[c]{@{}c@{}}Use of a coordinator, \\ starts from an arbitrary graph. 
		\end{tabular}                                 \\ \hline
		\multicolumn{1}{l|}{3/2-app. matching}  & $O(1)$         & $O(n/\sqrt{N})$                                                & \multicolumn{1}{c|}{$O(\sqrt{N})$}      & \begin{tabular}[c]{@{}c@{}}Use of a coordinator.% \\ (update-history kept) 
		\end{tabular}                                                                                  \\ \hline

	\multicolumn{1}{l|}{$(2+\epsilon)$-app. matching}  & $O(1)$         & $\widetilde{O}(1)$  & 
		\multicolumn{1}{c|}{$\widetilde{O}(1)$} & \begin{tabular}[c]{@{}c@{}} %\\ (no update-history kept) 
		\end{tabular}                                                                                     \\ \hline
		\multicolumn{1}{l|}{Connected comps}    & $O(1)$         & $O(\sqrt{N})$                                                & \multicolumn{1}{c|}{$O(\sqrt{N})$}      & \begin{tabular}[c]{@{}c@{}} Use of Euler tours,\\starts from an arbitrary graph.%\\ (\textbf{no} update-history) 
		\end{tabular}                                                                                                      \\ \hline
		\multicolumn{1}{l|}{$(1+\epsilon)$-MST} & $O(1)$         & $O(\sqrt{N})$                                                & \multicolumn{1}{c|}{$O(\sqrt{N})$}      & \begin{tabular}[c]{@{}c@{}} The approx. factor comes\\ from the preprocessing, \\ starts from an arbitrary graph.\end{tabular}                 \\ \hline
		\multicolumn{5}{c}{Results from reduction to the centralized dynamic model}                                                                                                                                                                                                                                                     \\ \hline
		\multicolumn{1}{c|}{Maximal matching}   & $O(1)$        & $O(1)$                                                       & \multicolumn{1}{c|}{$O(1)$}             & \begin{tabular}[c]{@{}c@{}}Amortized, randomized.%\\ starts from empty graph.
		\end{tabular}    \\ \hline
		\multicolumn{1}{c|}{Connected comps}    & $\widetilde{O}(1)$ & $O(1)$                                                       & \multicolumn{1}{c|}{$O(1)$}             & \begin{tabular}[c]{@{}c@{}}Amortized, deterministic.%,\\ starts from empty graph.
		\end{tabular} \\ \hline
		\multicolumn{1}{c|}{MST}                & \ $\widetilde{O}(1)$ & $O(1)$                                                       & \multicolumn{1}{c|}{$O(1)$}          & \begin{tabular}[c]{@{}c@{}}Amortized, deterministic.%,\\ starts from empty graph.
		\end{tabular}
	\end{tabular}
\end{table*}

\paragraph{Maximal matching.}
Our first algorithm maintains fully-dynamically a maximal matching in $O(1)$ rounds per update in the worst case, while the number of machines that are active per rounds is $O(1)$, and the total communication per round is $O(\sqrt{N})$. 
The general idea in this algorithm,  inspired from~\cite{neiman2016simple}, is to use vertex-partitioning across the machines and additionally to store at one machine the last $\sqrt{N}$ updates in a buffer, together with the changes that each of these updates generated. 
We call this summary of updates and the changes that they trigger the \emph{update-history}.
Every time that an update arrives (i.e., an edge insertion or an edge deletion), the update-history is sent to the endpoints that are involved in the update, and each endpoint adjusts its data structure based on the update-history (that is, it updates its knowledge of which vertices among its neighbors are free), and further sends back (to the machine that maintains the update-history) any possible changes that the update might have triggered. 
The machines that maintain the endpoints of the updated edge might further communicate with one of their neighbors to get matched with them.
Additional challenges arise from the fact that the neighborhood of a single vertex might not fit in a single machine.

For comparison, the best static MPC algorithm to compute a maximal matching in the static case runs in $O(\log \log n)$ when the space per machine is  $\widetilde{O}(n)$ \cite{ghaffari2018improved}, $O(\sqrt{ \log n})$ when the space is sublinear \cite{ghaffari2019sparsifying} and in $O(c/\delta)$ rounds when $N\in \Omega(n^{1+c})$ and the space per machine is  $\Omega(n^{1+\delta})$ \cite{lattanzi2011filtering}. 
These algorithms use all the machines at each round and generate $\Omega(N)$ communication per round.

We note that although our algorithm has communication complexity $O(\sqrt{N})$ per round in the case where the available memory per machine is $O(\sqrt{N})$, the communication complexity is actually proportional to the number of machines used by the system. Namely, if we allow larger memory per machine then the communication complexity reduces significantly. Hence, in real-world systems we expect our algorithm to use limited communication per MPC round.

\paragraph{3/2-approximate matching.}
We further study the problem of maintaining a maximum cardinality matching beyond the factor 2 approximation given by a maximal matching.
We present an algorithm for maintaining a $3/2$-approximate matching that runs in a constant number of rounds, uses $O(\sqrt{N})$ machines per round and with $O(\sqrt{N})$ communication per round.
The best known static algorithm for computing a $O(1+\epsilon)$ approximate matching runs in $O(\log \log n)$ rounds in the case where the memory available in each machine is $ \widetilde{O}(n)$~\cite{Czumaj2018round,ghaffari2018improved,assadi2019coresets} or in $O(\sqrt{ \log \Delta})$ rounds when the memory available in each machine is sublinear~\cite{yaroslavtsev2017massively}, where $\Delta$ the maximum degree in the graph.

\paragraph{$(2+\epsilon)$-approximate matching.}
Our algorithm for maintaining a maximal matching requires polynomial communication among the machines and the use of a coordinator machine. To overcome those restrictions, we explore the setting where we are allowed to maintain an almost maximal matching instead of a proper maximal matching. In other terms, 
at most an $\epsilon$ fraction of the edges of a maximal matching may be missing. 
In this setting, we show that we can adapt the fully-dynamic centralized algorithm by Charikar and Solomon \cite{charikar2018fully} that has polylogarithmic worst-case update time.
We note that our black-box reduction to the DMPC model yields a fully-dynamic algorithm with a polylogarithmic number of rounds.
However we show how we can adapt the algorithm to run in $O(1)$ rounds per edge insertion or deletion, using $O(\textit{polylog}(n))$ number of active machines and total communication per round. 
%Moreover, the algorithm admits the balanced property. 
\footnote{We note that one could adopt the algorithm from \cite{bernstein2019adeamortization} to maintain a (proper) maximal matching with the same asymptotic bounds; however, that algorithm does not maintain a consistent matching throughout its execution,
meaning that the maintained matching could be completely different between consecutive update operations, which is not a desirable property for many applications.}

\paragraph{Connected components and $(1+\epsilon)$ MST}
We consider the problem of maintaining the connected components of a graph and the problem of maintaining a $O(1+\epsilon)$-approximate Minimum Spanning Tree (MST) on a weighted graph.
For both %of these 
problems we present fully-dynamic deterministic algorithms that run in $O(1)$ rounds per update in the worst case, with $O(\sqrt{N})$ active machines and $O(\sqrt{N})$ total communication per round.
Notice that, in order to maintain the connected components of a graph, it suffices to maintain a spanning forest of the graph.
As it is the case also for centralized algorithms, the hard case is to handle the deletion of edges from the maintained spanning forest.
The main ingredient in our approach is the use of Euler tour of a spanning tree in each connected component. 
This enables us to distinguish 
%easily 
between different trees of the spanning forest, based on the tour numbers assigned to each of vertices of the trees, which we use to determine  
%easily 
whether a vertex has an edge to particular part of a tree. 
Notice that to achieve such a bound, each vertex needs to known the appearance numbers of its neighbors in the Euler tour, which one cannot afford to request at each round as this would lead to $O(N)$ communication.
We show how to leverage the properties of the Euler tour in order to avoid this expensive step.
In the static case, the best known algorithm to compute the connected components and the MST of a graph requires $O(c/\delta)$ rounds when $N\in \Omega(n^{1+c})$ and $S\in \Omega(n^{1+\delta})$ \cite{lattanzi2011filtering}.
In the case where $S\in o(n)$, \cite{Chitnis:2013:FCC:2510649.2511220} presented an algorithm to compute the connected components of a graph in $O(\log n)$ rounds, with all the machines and $\Omega(N)$ communication per round.

\paragraph{Bounds from the dynamic algorithms literature.}
We present a reduction to dynamic algorithms in the centralized computational model. More specifically, we show that if there exists a centralized algorithm with update time $u(m,n)$ and preprocessing time $p(m,n)$ on a graph with $m$ edges and $n$ vertices, then there exists a dynamic MPC algorithm which updates the solution in $O(u(m,n))$ rounds with $O(1)$ active machines per round and $O(1)$ total communication, after $p(m,n)$ rounds of preprocessing.
The characteristics of the centralized algorithm (e.g., amortized or worst-case update time, randomized or deterministic) carry over to the MPC model.
This reduction, for instance, implies an amortized $\widetilde{O}(1)$ round fully-dynamic DMPC algorithm for maintaining the connected components or the maximum spanning tree (MST) of a graph \cite{holm2001poly}, and an amortized $O(1)$ round fully-dynamic DMPC algorithm for the maximal matching problem \cite{somon2016fully}. These algorithms however do not guarantee worst-case update time, which is important in applications. Moreover, the connected components and MST algorithms have super-constant round complexity.

\smallskip\noindent\textbf{Road map.} In Section~\ref{sec:model} we introduce the DMPC model. Then, in Sections~\ref{sec:maximal-matching} and \ref{sec:mat1} we present our maximal matching and $\nicefrac{3}{2}$-approximate matching, respectively. 
We present our connected components and $(1+\epsilon)$-approximate MST algorithms in Section~\ref{sec:cc}. 
In Section~\ref{app:mat2}, we present our $(2+\epsilon)$-approximate matching algorithm, and finally the reduction is presented in Section~\ref{app:red}.
\section{The model}\label{sec:model}

In this work we build on the model that was introduced by Karloff, Suri, and Vassilvitski \cite{karloff2010model}, and further refined in \cite{andoni2014parallel,beame2013communication,goodrich2011sorting}. 
This model is commonly referred to as the \emph{Massive Parallel Computing (MPC)} model.
In its abstraction, the MPC model is the following.
The parallel system is composed by a set of $\mu$ machines $M_1, \dots, M_\mu$, each equipped with a memory that fits up to $S$ bits.
The machines exchange messages in synchronous rounds, and each machine can send and receive messages of total size up to $S$ at each round.
The input, of size $N$, is stored across the different machines in an arbitrary way.
We assume that $S,\mu \in O(N^{1-\epsilon})$, for a sufficiently small $\epsilon$.
The computation proceeds in rounds. 
In each round, each machine receives messages from the previous round.
Next, the machine processes the data stored  in its memory without communicating with other machines.
Finally, each machines sends messages to other machines.
At the end of the computation, the output is stored across the different machines and it is outputted collectively. The data output by each
machine has to fit in its local memory and, hence, each machine can output at most $S$ bits.

Since at each round all machines can send and receive messages of total size $S$, the total communication per round is bounded by $S \cdot \mu \in O(N^{2-2\epsilon})$.
See \cite{karloff2010model} for a discussion and justification. 
When designing MPC algorithms, there are three parameters that need to be bounded:

-- Machine Memory: In each round the total memory used by each machine is $O(N^{(1-\epsilon)})$ bits.

-- Total Memory: The total amount of data communicated at any round is $O(N^{(2-2\epsilon)})$ bits.

-- Rounds: The number of rounds is $O(\log^i n)$, for a small $i\geq 0$.

Several problems are known to admit a constant-round algorithm, such as sorting and searching \cite{goodrich2011sorting}.

\smallskip\noindent\textbf{Dynamic algorithms.}
In the centralized model of computation, dynamic algorithms have been extensively studied in the past few decades. The goal of a dynamic algorithm is to maintain the solution to a problem while the input undergoes updates. 
The objective is to update the solution to the latest version of the input, while minimizing the time spent for each update on the input. 
A secondary optimization quantity is the total space  required throughout the whole sequence of updates.

A dynamic graph algorithm is called \emph{incremental} if it allows edge insertions only, \emph{decremental} if it allows edge deletions only, and \emph{fully-dynamic} if it allows an intermixed sequence of both edge insertions and edge deletions. Most basic problems have been studied in the dynamic centralized model, and they admit efficient update times. Some of these problems include, connectivity and minimum spanning tree \cite{holm2001poly, nanongkai2017dynamic}, approximate matching \cite{arar2018dynamic, baswana2011fully, bernstein2019adeamortization, charikar2018fully, neiman2016simple, somon2016fully}, shortest paths \cite{Abraham2017fully, demetrescu2001fully}.

\smallskip\noindent\textbf{Dynamic algorithms in the DMPC model.}
Let $G=(V,E)$ be a graph with $n=|V|$ vertices and $m=|E|$ edges.
In the general setting of the MPC model, where the memory of each machine is strictly sublinear in $n$, 
no algorithms with constant number of rounds are known even for very basic graph problems, 
such as maximal matching, approximate
weighted matching, connected components. 
Recomputing the solution for each of those problems requires  $O(\log n)$ rounds, the amount of data that is shuffled between any two rounds can be as large as $O(N)$, all the machines are active in each round, and all machines need to communicate with each other.
Therefore, it is natural to ask whether we can update the solution to these problems after a small change to the input graph, using a smaller number of rounds, less active machines per round, and less total communication per round. 

Notice that bounding the number of machines that communicate 
immediately implies the same bound on the active machines per round.
For convenience, we call active the machines that are involved in communication. 
The number of active machines also implies a bound on the amount of data that are sent in one round, as each machine has information at most equal to its memory (i.e., $O(\sqrt{N})$ bits). 
The complexity of a dynamic algorithm in the DMPC model can be characterized by the following three factors: 

-- The \emph{number of rounds} required to update the solution. 

-- The \emph{number of machines} that are active per round. 

-- The \emph{total amount of data involved in the communication} per round. 

An ideal algorithm in the DMPC model processes each update using a constant number of rounds, using constant number of machines and constant amount of total communication. While such an algorithm might not always be feasible, a dynamic algorithm should use polynomially (or even exponentially) less resources than it's static counterpart in the MPC model.

\paragraph{Use of a coordinator.}
Distributed systems often host multiple jobs simultaneously, which causes 
different jobs to compete for 
resources.
Additionally, systems relying on many machines to work simultaneously are prone to failures of either machines or channels of communication between the machines.
Our model, allows solutions where all updates are sent to a single (arbitrary, but fixed) machine that keeps additional information on the status of the maintained solution, and then coordinates the rest of the machines to perform the update, by sending them large messages containing the additional information that it stores. 
Examples of such an algorithm is our algorithm for the maximal matching, and the $3/2$ approximate matching. 
In practice, the use of a coordinator might create bottlenecks in the total running time, since it involves transmission of large messages, and also makes the system vulnerable to failures (i.e., if the coordinator fails, one might not be able to recover the solution).

We note that the role of the coordinator in our matching algorithms is not to simulate centralized algorithms (as we do in our reduction from DMPC algorithms to dynamic centralized algorithms), i.e., to perform all computation at the coordinator machine while treating the rest of the machines as memory. 
In particular, we treat the coordinator as a buffer of updates and changes of the solution, and we communicate this buffer to the rest of the machines on a need-to-know basis. 

\paragraph{Algorithmic challenges.} The main algorithmic challenges imposed by our model are the sublinear memory (most of the algorithm known in the MPC model use memory in $\Omega(n)$) and the restriction on the number of machines used in every round. This second point is the main difference between the MPC and DMPC model and poses a set of new interesting challenges.

\section{Fully-dynamic DMPC algorithm for maximal matching}
\label{sec:maximal-matching}
In this section we present a deterministic fully-dynamic algorithm for maintaining a maximal matching with a constant number of rounds per update and a constant worst-case number of active machines per update, when the memory of each machine is $\Omega(\sqrt{N})$ bits, where $N=(m+n)$ 
and $m$ is the maximum number of edges throughout the update sequence. The communication per round is $O(\sqrt{N})$.
Recall that our model introduces additional restrictions in the design of efficient algorithms. 
Specifically, the memory of each machine might not  even be sufficient to store 
the neighborhood of a single vertex, which implies that 
the edges incident to a single vertex may be stored in polynomially many machines.
In this framework, a scan of the neighbors of a single vertex requires a polynomially number of active machines in each round.

Our algorithm borrows an observation from the fully-dynamic algorithm for maximal matching of Neiman and Solomon \cite{neiman2016simple}, which has $O(\sqrt{m})$ worst-case update time and $O(n^2)$ space, 
or the same amortized update bound with $O(m)$ space.
Specifically, Neiman and Solomon \cite{neiman2016simple} observe that a vertex either has a low degree, or has only few neighbors with high degree.
This allows us to treat vertices with large degree separately from those with relatively small degree.
We call a vertex \emph{heavy} if it has a large degree and \emph{light} if it has a small degree.
The threshold in the number of vertices that distinguishes light from heavy vertices is set to be $2\sqrt{m}$.
As the memory of each machine is $\Omega(\sqrt{m})$, we can fit the light vertices together with their edges on a single machine, but for heavy vertices we can keep only up to $O(\sqrt{m})$ of their edges in a single machine.
Given that each vertex knows whether it is an endpoint of a matched edge, the only non-trivial update to be handled is when an edge $e=(x,y)$ of the matching is deleted and we have to check whether there exists an edge adjacent to $x$ or $y$ that can be added to the matching.
Notice that if the neighborhood of each vertex fits in a single machine, then it is trivial to bound the number of rounds to update the solution, as it is sufficient to search for free neighbors of $x$ and $y$ that can be matched to those vertices. 
Such a search can be done in a couple of rounds by sending a message from $x$ and $y$ to their neighbors to ask whether they are free to join or not.
However, this does not immediately bound the number of active machines per round.

\smallskip\noindent\textbf{Overview.}
Our algorithm keeps for each light vertex all the edges of its adjacency list in a single machine. 
For every heavy node we keep only $\sqrt{2m}$ edges that we call \emph{alive}. We call \emph{suspended} the rest of the edges of $v$.
We initially invoke an existing algorithm to compute a maximal matching in $O(\log n)$ rounds. 
Our algorithm always maintains a matching with 
the following invariant: 

\begin{invariant}
	\label{inv:heavy-matched}
No heavy vertex gets unmatched throughout the execution of the algorithm\footnote{After computing the initial maximal matching some heavy vertices might be unmatched. During the update sequence, once a heavy vertex gets matched, it is not being removed from the matching, unless it becomes light again}. 
\end{invariant}

If a new edge gets inserted to the graph, we simply check if we can add it to the matching (i.e., if both its endpoints are free). 
Now assume that an edge $(x,y)$ from the matching gets deleted. 
If both the endpoints are light, then we just scan their adjacency lists (that lie in a single machine) to find a replacement edge for each endpoint of $(x,y)$.
If $x$ is heavy, then we search the $\neighborsbounds$ alive edges of $x$ and if we find a neighbor that is free we match it.
If we cannot find a free neighbor of $x$, then among the (matched) $\neighborsbounds$ alive neighbors of $x$ there should exist a neighbor $w$ with a light mate $z$ (as otherwise the sum of degrees of the mates of neighbors of $x$ would exceed $m$), in which case we remove $(w,z)$ from the matching, we add $(x,w)$ to the matching, and we search the neighbors of the (light) vertex $z$ for a free neighbor to match $z$. 
If $y$ is heavy, we proceed analogously.

We build the necessary machinery in order to keep updated the aforementioned allocation of the adjacency lists to the available machines. This involves moving edges between machines whenever this is necessary, which introduces several challenges, since we cannot maintain updated the information in all machines with only $O(1)$ message exchange. On the other hand, we cannot allocate edges to an arbitrary number of machines. 
We deal with these issues by periodically updating the machines by taking advantage of the fact that we can send large messages from the coordinator machine.

\smallskip\noindent\textbf{Initialization and bookkeeping.}
Our algorithm makes use of $O(\sqrt{N})$ machines. % \sidenote{NP: where $O$ hides a small constant: probably less than 5.}.
We assume that all vertices of the graph contain IDs from $1$ to $n$.
Our algorithm executes the following preprocessing.
First, we compute a maximal matching (this can be done in $O(\log n)$ rounds with the randomized CONGEST algorithm from~\cite{ISRAELI198677}).
Together with each edge in the graph we store whether an endpoint of the edge is matched:  if it is, we also store its mates in the matching.
In a second phase, we compute the degree of each vertex (this can be done in $O(1)$ rounds for all vertices). 
We place the vertices into the machines in such a way that the whole adjacency list of light vertices, and arbitrary $\neighborsbounds$ edges from the adjacency list of heavy vertices, are stored in single machines. 
The remaining adjacency list of a heavy vertex is stored in separate exclusive machines (only store edges of that vertex) so that as few machines as possible are used to store the adjacency list of a heavy vertex.
On the other hand, the light vertices are grouped together into machines. 
The machines that store heavy vertices are characterized as \emph{heavy machines}, and those storing adjacency lists of light vertices as \emph{light machines}.

One of the machines acts as the coordinator, in the sense that all the queries and updates are executed through it. 
The coordinator machine, denoted by $M_C$, stores an update-history $\mathcal{H}$ of the last $O(\sqrt{N})$ updates in both the input and the maintained solution, i.e., which edges have been inserted and deleted from the input  in the last $\sqrt{N}$ updates and which edges have been inserted and deleted in the maintained maximal matching. 
Moreover, for each newly inserted edge that exists in the update-history we store a binary value for each of its endpoints that indicates if their adjacency list has been updated to include the edge.

For convenience, throughout this section we say that the algorithm invokes some function without specifying that all the communication is made through $M_C$.
We dedicate $O(n/\sqrt{N})$ machines to store statistics about the vertices of the graphs, such as their degree, whether they are matched and who is their mate, the machine storing their alive edges, the last in the sequence of machines storing their suspended edges (we treat the machines storing suspended edges as a stack).
To keep track of which machine keeps information about which vertices, we allocate many vertices with consecutive IDs to a single machine so that we can store the range of IDs stored in each machine. 
Hence in $M_C$, except of the update-history $\mathcal{H}$, we also store for each range of vertex IDs the machine that contains their statistics.
This information fits in the memory of $M_C$ as the number of machines is $O(\sqrt{N})$.
Finally, $M_C$ also stores the memory available  in each machine.

\smallskip\noindent\textbf{Maintaining the bookkeeping.}
In what follows, for the sake of simplicity, we assume that the update-history $\mathcal{H}$ is updated automatically. 
Further, we skip the description of the trivial update or queries on the statistics of a vertex, such as its degree, whether it is an endpoint of a matched edge, the machine storing its alive edges, etc.
All of these can be done in $O(1)$ rounds via a message through the coordinator machine $M_C$.
After each update to the graph, we update the information that is stored in a machine by executing those updates in a round-robin fashion, that is, each machine is updated after at most $O(\sqrt{N})$ updates. Recall that we use $O(\sqrt{N})$ machines.

Throughout the sequence of updates we use the following set of supporting procedures to maintain a valid allocation of the vertices into machines: 

-- {\bf $\getalive(x):$} Returns the ID of the machine storing the alive neighbors of $x$.

-- {\bf $\getdegreeinmachine(M,x):$} Returns $x$'s degree in machine $M$. 

-- {\bf $\getsuspended(x):$} Returns the ID of the last in the sequence of heavy machines storing the edges of $x$. 

-- {\bf $\fits(M, s):$} Return $true$ if $s$ edges fit into a light machine $M$, and $false$ otherwise.

-- {\bf $\tofit(s):$} Returns the ID of a light machine that has enough memory to store $s$ edges, and the available space in that machine.	

-- {\bf $\addedge((x,y))$:}
	We only describe the procedure for $x$, as the case for $y$ is completely analogous.
	If $x$ is heavy, add $(x,y)$ in the machine $\getsuspended(x)$ if it fits, or otherwise to a new machine, and set the new machine to be $\getsuspended(x)$.
	If, on the other hand, $x$ is light and $(x,y)$ fits into $\getalive(x)$, 
	we simply add $(x,y)$ in $\getalive(x)$.
	If, $(x,y)$ does not fit in $\getalive(x)$ then call $\moveedges(x,s,M_x,\tofit(s),\mathcal{H})$, where $s$ is the number of alive edges of $x$ (if $x$ becomes heavy, we mark that).
	If all of the remaining edges in the machine $M_x$ (of light vertices other than $x$) fit into another machine, then move them there (this is to bound the number of used machines).

-- {\bf $\moveedges(x, s, M_1, M_2, \mathcal{H})$}, where $x$ is light: First, remove from machine $M_1$ deleted edges of $x$ based on $\mathcal{H}$.
	Second, send from $M_1$ up to $s$ edges of $x$ to $M_2$. 
	If the $s$ edges do not fit into $M_2$, move the neighbors of $x$ from $M_2$ to a machine that fits them, i.e., execute $M_{x'}=\tofit(s+\getdegreeinmachine(M,x))$, move the $s$ edges of $x$ in $M_1$ to $M_{x'}$ and call $\moveedges(x, \getdegreeinmachine(M,x), M_2, M_{x'}, \mathcal{H})$. 

-- {\bf $\fetchmore(x,s)$}, where $x$ is heavy: Moves $s$ suspended edges to the machine $M_x=\getalive(x)$. 
	To achieve this we call \\
	$\moveedges(x, s, \getsuspended(x),  M_x)$.
	While the number of edges moved to $M_x$ is $s'<s$, call  $\moveedges(x, s-s',\getsuspended(x), M_x)$. 

-- {\bf $\movesuspended(x,s,L)$}, where $x$ is heavy: Moves the set $L$ of $s$ edges of $x$ from machine $\getalive(x)$ to the machines storing the suspended edges of $x$. We first fit as many edges as possible in the machine $\getsuspended(x)$, and the rest (if any) to a new machine.

-- {\bf $\updatevertex(x, \mathcal{H}):$} Update the neighbors of $x$ that are stored in $M_x = \getalive(x)$ based on $\mathcal{H}$.
	If $x$ is heavy and the number of edges from the adjacency list of $x$ in $M$ is $s < \neighborsbounds$, then call $\fetchmore(x,\neighborsbounds-s)$.
	If $x$ is heavy and the set of alive edges has size $s>\neighborsbounds$, then call $\movesuspended(x,s-\neighborsbounds,L)$, where $L$ are $s-\neighborsbounds$ edges of $x$  that do not contain the edge $(x,mate(x))$.
	If, on the other hand, $x$ is light and the set of alive edges of $x$ does not fit in $M_x$ after the update, call $\moveedges(x,s,M_x,\tofit(s),\mathcal{H})$, where $s$ is the number of alive edges of $x$.
	If all of the remaining edges in the machine $M_x$ (of light vertices other than $x$) fit into another machine, then move them there (this is to bound the number of used machines). 

-- {\bf $\updateMachine(M, \mathcal{H}):$} Update all adjacency lists stored in machine $M$ to reflect the changes in the update-history $\mathcal{H}$. 
	If $M$ is a heavy machine of a vertex $x$, we proceed as in the case of $\updatevertex(x, \mathcal{H})$, but now on machine $M$ rather than $\getalive(x)$.
	Now we assume $M$ is light.
	First, delete the necessary edges of the light vertices stored at $M$ based on $\mathcal{H}$.  
	If all of the remaining edges of the machine fit into another half-full machine, then move them there (this is to bound the number of used machines).

\smallskip\noindent\textbf{Handling updates.}
Now we describe how our algorithm updates the maintained maximal matching after an edge insertion or an edge deletion.
\paragraph{\bf$insert(x,y)$}
First, execute $\updatevertex(x)$, $\updatevertex(y)$, and $\addedge((x,y))$. 
If both $x$ and $y$ are matched then do nothing and return. 
If neither $x$ nor $y$ are matched, add $(x,y)$ to the matching and return.
In the case where $x$ is matched and heavy and $y$ is unmatched and light then do nothing and return.
The same happens if $y$ is matched and heavy and $x$ is unmatched.
If $x$ is unmatched and heavy, search for a (matched, as this is a maximal matching) neighbor $w$ of $x$ whose mate $z$ is light, remove $(w,z)$ from the matching, add $(x,w)$ to the matching, and if $z$ (who is a light vertex) has an unmatched neighbor $q$ add $(z,q)$ to the matching. If $y$ is unmatched and heavy proceed analogously. Note that this restores Invariant \ref{inv:heavy-matched}. 
In any case, the update-history is updated to reflect all the changes caused by the insertion of $(x,y)$.

\paragraph{\bf$delete(x,y)$}
First, update $\mathcal{H}$ to reflect the deletion of $(x,y)$ and call $\updatevertex(x)$ and $\updatevertex(y)$.
If $(x,y)$ is not in the matching do nothing and return. (The edge has already been deleted from the adjacency lists via the calls to $\updatevertex$.) 
If $(x,y)$ is in the matching proceed as follows.
First, remove $(x,y)$ from the matching.
If $z\in\{x,y\}$ is heavy, search for a neighbor $w$ of $z$ whose mate $w'$ is light, remove $(w,w')$ from the matching, add $(z,w)$ to the matching, and if $w'$ (who is a light vertex) has an unmatched neighbor $q$ add $(w',q)$ to the matching. 
If $z\in\{x,y\}$ is light, scan the neighborhoods of $z$ for a unmatched vertex $w$, and add $(z,w)$ to the matching. 
In any case, the update-history is updated to reflect all the changes caused by the deletion of $(x,y)$.

\begin{lemma}
The algorithm 
uses $O(\sqrt{N})$ machines.
\end{lemma}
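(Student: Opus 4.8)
The plan is to account separately for the four kinds of machines the algorithm keeps and to show that each kind contributes only $O(\sqrt{N})$: (i) the coordinator $M_C$, which is a single machine; (ii) the $O(n/\sqrt{N})$ machines holding per-vertex statistics; (iii) the \emph{heavy machines}, each exclusively storing part of the adjacency list of one heavy vertex; and (iv) the \emph{light machines}, each packing together the adjacency lists of several light vertices. Part (i) is trivial, and part (ii) follows since $n\le N$ gives $n/\sqrt{N}\le N/\sqrt{N}=\sqrt{N}$.

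For (iii), first bound the number of heavy vertices: a heavy vertex has degree exceeding the threshold $2\sqrt{m}$, and $\sum_{v}\deg(v)=2m$, so there are fewer than $\sqrt{m}$ of them. Each heavy vertex $x$ uses $O(1)$ machines for its (at most $\neighborsbounds$) alive edges, plus a stack of machines holding its suspended edges. Since a machine has room for $\Omega(\sqrt{m})$ edges and the suspended edges of $x$ are maintained with stack discipline by $\movesuspended$ and $\fetchmore$, every machine of $x$'s stack except the top one is full; hence $x$ occupies $O(1)+O(\deg(x)/\sqrt{m})$ machines. Summing over the at most $\sqrt{m}$ heavy vertices gives $O(\sqrt{m})+O(m/\sqrt{m})=O(\sqrt{m})=O(\sqrt{N})$.

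For (iv), the total number of edge‑endpoints stored on light machines is at most $2m$, so it suffices to show that all but $O(1)$ light machines are at least \emph{half full}, i.e.\ hold $\Omega(\sqrt{m})$ edges; the count of light machines is then $O(1)+O(m/\sqrt{m})=O(\sqrt{N})$. This is precisely what the compaction clauses inside $\addedge$, $\moveedges$, $\updatevertex$ and $\updateMachine$ are designed to enforce: whenever a touched light machine's residual load drops below the half-full threshold its remaining contents are relocated into another (half-full) light machine, and $\tofit$ always prefers an existing machine with spare room over opening a new one. Formally I would state this as an invariant --- ``at most $O(1)$ light machines are less than half full'' --- and prove it by induction over the update sequence, using that each update directly inspects only the $O(1)$ light machines of the updated endpoints (through $\updatevertex$/$\addedge$) while the round-robin bookkeeping visits every machine within $O(\sqrt{N})$ updates (through $\updateMachine$); crucially, a pending deletion recorded in $\mathcal{H}$ does not free space in a machine until that machine is next visited, at which point the consolidation clause fires. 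Combining (i)--(iv) yields $O(\sqrt{N})$ machines overall.

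The main obstacle is the inductive step in (iv): one has to verify that no procedure can leave more than a constant number of underfull light machines --- in particular that the recursive cascades inside $\moveedges$ and $\fetchmore$ terminate while charging each freshly created machine either to $\Omega(\sqrt{m})$ stored edges or to one of the $O(1)$ exceptional slots, and that the interaction between on-demand updates and the round-robin schedule never lets underfull machines accumulate. Parts (i)--(iii) are routine bookkeeping once the $O(\sqrt{m})$ bound on heavy vertices and the stack discipline of the suspended-edge machines are in hand.
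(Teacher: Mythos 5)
Your proposal follows the paper's proof in substance: stack discipline makes every machine in a heavy vertex's chain full except the last one, the compaction performed on light machines keeps all but $O(1)$ of them at least half full (the paper phrases this as ``no two light machines can be merged''), and both facts reduce the machine count to that of an optimal packing of $O(m)$ edge-endpoints into machines of capacity $\Theta(\sqrt{N})$, which is $O(\sqrt{N})$. The extra bookkeeping you add --- the coordinator, the $O(n/\sqrt{N})$ statistics machines, the explicit $\sqrt{m}$ bound on the number of heavy vertices --- and your candid flagging of the light-machine invariant as the step that really needs an inductive argument (the paper asserts it with the same informality you do) are reasonable elaborations, but the underlying approach is the same as the paper's.
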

\begin{proof}
We show that we maintain at most twice the number of machines than the optimum placement.
Let $M_1, \dots, M_l$ be the machines that store the adjacency list of a heavy vertex $x$, where $M_1=\getalive(x)$. 
Since only $M_l$ is not full, we use at most twice as many machines as the optimum placement for each heavy vertex.
Let now $M_1, \dots, M_l$ be all the machines storing light vertices. 
Since with each update of a light adjacency list we check if we can merge two light machines, it follows that there are no two machines whose edges can be stored in one. 
Hence, our claim holds also in this case.
The lemma follows from the observation that the optimum placement of the edges requires $O(\sqrt{N})$ machines.
\end{proof}

\begin{lemma}
Both $insert(x,y)$ and $delete(x,y)$ run in $O(1)$ rounds, activate $O(1)$ machines per round, and generate $O(\sqrt{N})$ communication per round.
\end{lemma}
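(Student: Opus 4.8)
The plan is to trace through each of the two update procedures and bound, for every call to a supporting subroutine, three quantities: the number of rounds, the number of machines that become active, and the communication generated. The key observation making the round count constant is that the update-history $\mathcal{H}$ has size $O(\sqrt{N})$ and fits in a single message; thus any call routed through $M_C$ that passes $\mathcal{H}$ to one machine costs $O(1)$ rounds and $O(\sqrt{N})$ communication. So I would first argue that the ``flat'' steps of $insert(x,y)$ and $delete(x,y)$ — looking up statistics of $x$ and $y$, testing whether they are matched, identifying their mates, adding or removing $O(1)$ edges from the matching — each touch $O(1)$ machines (the statistics machines for $x,y$, the machine $\getalive(x)$, $\getalive(y)$, $M_C$) and each cost $O(1)$ rounds and $O(\sqrt{N})$ communication.

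The crux is bounding the recursive subroutines $\moveedges$, $\fetchmore$, $\movesuspended$, $\updatevertex$, and $\updateMachine$. For $\updatevertex(x,\mathcal{H})$: if $x$ is light, after deleting edges the list shrinks, so at most one relocation via $\moveedges$ to $\tofit(s)$ occurs; if $x$ is heavy, we either call $\fetchmore$ to top up to $\neighborsbounds$ alive edges, or call $\movesuspended$ to push the excess down — in both cases to $O(1)$ new machines since a single machine holds $\Omega(\sqrt{N})$ edges and we move $O(\sqrt{N})$ of them. For $\moveedges(x,s,M_1,M_2,\mathcal{H})$ I would argue the recursion bottoms out after $O(1)$ levels: the recursive call is triggered only when the $s \le \neighborsbounds$ edges of $x$ do not fit into $M_2$ alongside $M_2$'s current occupants, in which case we relocate $x$'s edges to a fresh machine $\tofit(\cdot)$ (one level) and recurse on $M_2$'s \emph{other} residents, who now have a machine to themselves, so no further recursion. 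Similarly $\fetchmore$ loops only while fewer than $s$ edges have been moved, which happens at most $O(1)$ times because each pass empties one suspended machine holding $\Omega(\sqrt{N})$ edges while $s \le \neighborsbounds$. Each of these subroutine calls therefore activates $O(1)$ machines, runs in $O(1)$ rounds, and moves $O(\sqrt{N})$ edges, i.e.\ $O(\sqrt{N})$ communication.

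Finally, for the per-update round-robin maintenance step I would note that exactly one call to $\updateMachine(M,\mathcal{H})$ is performed per update; by the analysis above (heavy case reduces to $\updatevertex$, light case is a deletion plus at most one merge) this is $O(1)$ rounds, $O(1)$ active machines, $O(\sqrt{N})$ communication. Summing the $O(1)$ subroutine invocations made by $insert$ or $delete$, plus the one maintenance call, gives the claimed $O(1)$ rounds, $O(1)$ active machines per round, and $O(\sqrt{N})$ communication per round.

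I expect the main obstacle to be rigorously showing that the recursions in $\moveedges$ and $\fetchmore$ terminate in $O(1)$ steps rather than, say, $O(\log N)$: this rests on the invariant that we only ever move at most $\neighborsbounds = O(\sqrt{N})$ edges of a single vertex at once while every machine's capacity is $\Omega(\sqrt{N})$, so one ``clean'' machine always suffices and the displaced residents never cascade. Making this cascade-freeness precise — in particular checking that $\tofit$ genuinely returns a machine with enough slack and that the ``merge remaining edges'' cleanup does not itself trigger a chain of merges — is the delicate part; I would isolate it as a small claim about the placement invariant maintained by the bookkeeping procedures and then let the rest of the proof follow by routine accounting.
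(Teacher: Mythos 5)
Your decomposition is essentially the paper's: handle the ``flat'' steps trivially through $M_C$, then argue that the recursive subroutines ($\moveedges$, $\fetchmore$, $\movesuspended$, $\updatevertex$) each terminate after $O(1)$ machine touches, and correctly single out $\fetchmore$ (with its loop over $\moveedges$ calls) as the only non-trivial case. That matches the paper's proof structure.

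However, the detail you flag as ``the delicate part'' is exactly where your sketch currently has a gap, and the paper's proof supplies a specific ingredient you do not state. You argue that $\fetchmore$ loops $O(1)$ times ``because each pass empties one suspended machine holding $\Omega(\sqrt{N})$ edges while $s\le\neighborsbounds$.'' This tacitly assumes that a full suspended machine contributes $\Omega(\sqrt{N})$ \emph{live} edges of $x$. But $\moveedges$ first discards stale edges of $x$ from the source machine based on $\mathcal{H}$; a machine that is nominally ``full'' could in principle hold mostly edges that have already been deleted from the graph, in which case one pass yields few usable edges and the loop count is not obviously bounded. The paper closes this with the round-robin maintenance invariant: because every machine is refreshed at least once every $O(\sqrt{N})$ updates and each update deletes at most one graph edge, the \emph{total} number of stale edges across all machines at any time is $O(\sqrt{N})$. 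Combined with the fact that all but one of the suspended machines of $x$ are full with $\Omega(\sqrt{N})$ edges, at most one machine's worth of edges can be lost to staleness, and so a constant number of $\moveedges$ calls always yields the required $s\le\neighborsbounds=O(\sqrt{N})$ edges. You mention the $O(\sqrt{N})$ size of $\mathcal{H}$ as a key observation at the start, but you need to connect it to the stale-edge count inside the $\fetchmore$ analysis; without that connection, ``one full machine suffices'' is an assertion rather than an argument. The rest of your accounting (the $\moveedges$ displacement cascade, the single round-robin $\updateMachine$ call, and the flat steps) is sound and agrees with the paper.
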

\begin{proof}
Recall that we manage the machines that are used to store the sequence of machines storing the suspended edges of heavy vertices as stacks, that is, we store the last machine storing the suspended edges of a vertex $x$ together with the rest of the statistics for $x$, and each machine maintains a pointer to the next machine in the sequence.
Hence, we can access in $O(1)$ rounds the machine that is last in the sequence of machines maintaining the suspended edges of a vertex.
The only supporting function that is not trivially executable in $O(1)$ rounds is  $\fetchmore$.
Note that a call to $\fetchmore$ makes multiple calls to $\moveedges$ to transfer edges suspended edges of a heavy vertex $x$.
As each machine is updated every $O(\sqrt{N})$ rounds, it follows that the number of edges that have been removed from the graph and the machines storing those edges are not yet updated, is $O(\sqrt{N})$.
As all the calls to $\moveedges$ transfer at most $O(\sqrt{N})$ edges of $x$, and all but one machines storing suspended edges of $x$ are full, it follows that there is at most a constant number of calls to $\moveedges$.
\end{proof}

\section{Fully-dynamic 3/2-approximate maximum matching}\label{sec:mat1}

The algorithm for the $3/2$ approximate matching builds on top of the algorithm for maintaining a maximal matching from Section \ref{sec:maximal-matching}. Our algorithm is an adaptation of the algorithm from \cite{neiman2016simple} to our DMPC model.
Our algorithm's approximation is based on a well-known graph-theoretic connection between augmenting paths in an unweighted graph, with respect to a matching, and the approximation factor of the matching relatively to the maximum cardinality matching. 
An augmenting path is a simple path starting and ending at a free vertex, following alternating unmatched and matched edges.
Specifically, a matching that does not contain augmenting paths of length $(2k-1)$ in a graph, is a $(1+\frac{1}{k})$-approximate matching~\cite{hopcroft1973n}.  
In this section we show that it is possible to use the technique in~\cite{neiman2016simple} to design a simple DMPC algorithm for $k=2$.
The additional information that the algorithm needs to maintain, compared to the algorithm from Section \ref{sec:maximal-matching}, is the number of unmatched neighbors of each vertex. 
We call these counters \emph{free-neighbor} counters of the light vertices.
We keep this information in the $O(n/\sqrt{N})$ machines storing the statistics about the vertices of the graph.
In this algorithm, we assume that the computation starts from the empty graphs (An initialization algorithm for this problem would require eliminating all augmenting paths of length 3, but we are not aware of such an algorithm that does not require considerably more than $O(N)$ total memory).

Since the algorithm from Section \ref{sec:maximal-matching} maintains a matching where all heavy vertices are always matched, we only need to update the free-neighbor counters whenever a light vertex changes its matching status. 
Recall that a light vertex keeps all of its neighbors in the same machine. 
Therefore, we simply need to update the counters of the neighbors of the light vertex.
This requires a message of size $O(\sqrt{N})$ from the light vertex $v$ that changed its status to the coordinator and from there appropriate messages of total size $O(\sqrt{N})$ to the $O(n/\sqrt{N})$ machines storing the free-neighbor counters of the neighbors of $v$.

Given that we maintain for each vertex its free-neighbor counter, we can quickly identify whether an edge update introduces augmenting paths of length $3$. 
The modifications to the algorithm from Section \ref{sec:maximal-matching} are as follows.

-- In the case of the insertion of edge $(u,v)$, if $u$ is matched but $v$ unmatched, we check whether the mate $u'$ of $u$ has a free neighbor $w$; if this is the case, we remove $(u,u')$ from the matching and we add $(w,u')$ and $(u,v)$ (this is an augmenting path of length 3). The only free-neighbor counters that we have to update are those of the neighbors of $w$ and $v$, as no other vertices change their status, and no new augmenting paths are introduces as no matched vertex gets unmatched.

-- If both $u$ and $v$ are free after the insertion of $(u,v)$, we add $(u,v)$ into the matching and  update the free-neighbor counters of all neighbors of $u$ and $v$ (who are light vertices, as all heavy vertices are matched).

-- If we delete an edge which is not in the matching, 
	then we simply update the free-neighbor counters of its two endpoints, if necessary.
	
-- Whenever an edge $(u,v)$ of the matching is deleted, we treat $u$ as follows. If $u$ has a free neighbor $w$, then we add $(u,w)$ to the matching and update the free-neighbor counters of the neighbors of $w$ (who is light).
	If $u$ is light but has no free neighbors, then we search for an augmenting path of length 3 starting from $u$. 
	To do so, it is sufficient to identify a neighbor $w$ of $u$ whose mate $w'$ has a free neighbor $z\not=u$.
	If there exists such $w'$ then we remove $(w,w')$ from the matching and add $(u,w)$ and $(w',z)$ to the matching, and finally update the free-neighbor counters of the neighbors of $z$ (who is light). No other vertex changes its status.
	If on the other hand, $u$ is heavy, then we find an alive neighbor $w$ of $u$ with a light mate $w'$, remove $(w,w')$ from the matching and add $(u,w)$ to it.
	(This can be done in $O(1)$ rounds communication through the coordinator with the, up to $n/\sqrt{N}$, machines storing the mates of the statistics of the $O(\sqrt{N})$ alive neighbors of $w'$.)
	Finally, given that $w'$ is light, we proceed as before trying to either match $w'$ or find an augmenting path of length 3 starting from $w'$.
	Then, we proceed similarly to the case where $u$ was light.

Notice that in all cases where we have to update the free-neighbor counters of all neighbors of a vertex $v$, $v$ is a light vertex, so there are at most $O(\sqrt{N})$ counters to be updated and thus they can be accessed in $O(1)$ rounds, using $O(n/\sqrt{N})$ active machines, and $O(\sqrt{N})$ communication complexity. Hence, given the guarantees from Section \ref{sec:maximal-matching} and the fact that we only take a constant number of actions per edge insertion or deletion, we conclude that our algorithm updates the maintained matching  
in $O(1)$ rounds, using $O(n/\sqrt{N})$ machines and $O(\sqrt{N})$ communication per round in the worst case.
We conclude this section by proving the approximation factor of our algorithm.

\begin{lemma}
The algorithm described in this section correctly maintains a $3/2$-approximate matching.
\end{lemma}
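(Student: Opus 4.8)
The plan is to derive the approximation guarantee from the classical structural fact already quoted above: a matching with no augmenting path of length at most $2k-1$ is $(1+\frac{1}{k})$-approximate~\cite{hopcroft1973n}. Taking $k=2$, it suffices to prove that the algorithm maintains the invariant that the current matching $M$ is \emph{maximal} (equivalently, has no augmenting path of length $1$) and contains \emph{no augmenting path of length $3$}. Maximality is inherited from the maximal-matching algorithm of Section~\ref{sec:maximal-matching}, and each modified update rule above preserves it, since whenever a vertex is left unmatched by an update all of its neighbors are matched at that point. Hence the real work is the absence of length-$3$ augmenting paths, which I would establish by induction over the update sequence; the base case is vacuous because the algorithm starts from the empty graph, and I would always use the invariant \emph{before} an update to reason about the state \emph{after} it.

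For the inductive step I would split on the type of update, taking for granted that the free-neighbor counters are kept correct --- the rules above specify exactly which of them change after each update --- so that each search for a free neighbor returns a correct answer. Deleting an edge not in $M$ cannot create any augmenting path, since every path of the resulting graph is already a path of the old graph with respect to the same matching, so here the invariant is immediate. For an insertion of $(u,v)$, any augmenting path of length at most $3$ in the new graph must traverse the new edge, and since $(u,v)\notin M$ it can only appear as an unmatched \emph{end} edge; thus a newly created short augmenting path is either the single edge $(u,v)$ with both endpoints free, or has the shape $w - u' - u - v$ where $v$ is free, $u'$ is the mate of $u$, and $w$ is a free neighbor of $u'$ (or the symmetric shape with the roles of $u$ and $v$ exchanged). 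These are exactly the configurations tested by the modified insertion rules; after augmenting along such a path I would check that the set of free vertices only shrinks, and that, using the pre-update invariant to bound how many free neighbors $u$, $v$ and the mate of $u$ can have simultaneously, no length-$3$ augmenting path appears around either of the two edges just added to $M$.

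The remaining and hardest case is the deletion of a matched edge $(x,y)$, which frees both $x$ and $y$, so any new short augmenting path must have one of the constantly many vertices whose matched status changes as an endpoint; by symmetry it is enough to analyze the side of $x$. If $x$ has a free neighbor $w$, matching $(x,w)$ destroys every length-$1$ augmenting path at $x$; otherwise the algorithm searches for a neighbor $w$ of $x$ whose mate $w'$ has a free neighbor $z\neq x$ and augments along $x - w - w' - z$, which removes every length-$3$ augmenting path through $x$ while again only shrinking the free set, and if no such $w$ exists then no such path exists. When $x$ is heavy, Invariant~\ref{inv:heavy-matched} forces one extra step: the algorithm first re-matches $x$ onto an alive neighbor $w$ whose mate $w'$ is \emph{light} (such a $w$ exists by the degree-sum argument of Section~\ref{sec:maximal-matching}), and then treats the now-free light vertex $w'$ exactly as it treats a light endpoint. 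The two points I expect to be the main obstacle are (i) bounding this cascade --- each of the at most two endpoints of $(x,y)$ triggers at most one heavy re-matching, and the ensuing light repair either matches a single vertex or performs a single length-$3$ augmentation, neither of which creates a new free vertex, so the cascade terminates in $O(1)$ steps --- and (ii) showing that the surgery carried out on the $x$-side does not re-introduce a short augmenting path on the $y$-side (or conversely), and that none of the edges newly inserted into $M$ during a repair becomes the middle edge of a length-$3$ augmenting path. As in the insertion case, I expect both (i) and (ii) to follow from the pre-update invariant: the endpoints of the old matched edges involved in a repair could not have had the two disjoint free neighbors that a hidden length-$3$ augmenting path would require. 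Once the invariant is verified for every update, the lemma follows from~\cite{hopcroft1973n} applied with $k=2$.
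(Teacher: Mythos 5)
Your proposal is correct and follows essentially the same route as the paper: both argue by induction over the update sequence that the matching stays maximal and contains no length-$3$ augmenting path (the paper phrases the invariant as ``no matched edge has a free neighbor at both endpoints,'' which is the same condition), verify it by a case analysis over update types, and then invoke the classical augmenting-path characterization from~\cite{hopcroft1973n} with $k=2$. You are somewhat more explicit in a few places --- e.g., insisting that the free neighbor $z$ of the mate be distinct from $u$, and spelling out the heavy-vertex cascade --- but these are refinements of, not departures from, the paper's argument.
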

\begin{proof}
In order to guarantee the $3/2$ approximation we need to argue that there are no augmenting paths of length $3$ or more.
Such a path exists if and only if there is an edge of the matching whose both endpoints have a free neighbor.
We show that after every update made by our algorithm, we eliminate all such  matched edges.
That is, for each edge of the matching we ensure that at most one endpoint has a free neighbor.
We proceed with a case study, assuming that our claim holds just before the update we consider.
Recall that the maintained matching is always maximal as we build on the algorithm from Section \ref{sec:maximal-matching}.
The only two cases where we need to search for an augmenting path of length $3$ is when either a new vertex becomes free, or when we connected a matched vertex with a free vertex.
In the case where a vertex $u$ becomes free due to an edge deletion, our algorithm tests whether $u$ is an endpoint of a length-3 augmenting path $\langle u,w,w',z \rangle$, where $w$ is a matched neighbor of $u$ that the mate of $u$ and $w$ a free neighbor of $u'$, if $u'$ has free neighbor, and by removing $(u,u')$ and adding $(u,v)$ and $(u',w)$ to the matching to augment the length $3$ augmenting path. This does not create new augmenting paths as $u$ and $z$ have no free neighbors %(as they would be matched before the search of an augmenting path) 
and no new vertex becomes free. 
For the second case where we connect a matched and a free edge, we again search and augment possible augmenting paths of length 3.
Given that all free-neighbors counters are updated every time a vertex enters/leaves the matching, our algorithm maintains a $3/2$-approximate matching.
\end{proof}
\section{Fully-dynamic connected components and approximate MST}\label{sec:cc}

In this section we present a fully-dynamic deterministic distributed algorithm for maintaining the connected components of a graph with constant number of rounds per edge insertion or deletion, in the worst case\footnote{Note that no constant round algorithm for connected component is known for the static case. On the downside, the number of active machines per round is not bounded. We leave as an interesting area of future work to design an algorithm that uses a smaller number of machines per update}.
At the heart of our approach we use Euler tours, which have been successfully used in the design of dynamic connectivity algorithms in the centralized model of computation, e.g., in \cite{henzinger1999randomized,holm2001poly}.
Given a rooted tree $T$ of an undirected graph $G$, an \emph{Euler tour} (in short, E-tour) of $T$ is a path along $T$ that begins at the root and ends at the root, traversing each edge exactly twice. The E-tour is represented by the sequence of the endpoints of the traversed edges, that is, if the path uses the edges $(u,v),(v,w)$, then $v$ appears twice. 
As an E-tour is defined on a tree $T$, we refer to the tree $T$ of an E-tour as the Euler tree (E-tree, in short) of the E-tour.
%, and we refer to it as the Euler Tree, or ET-tree in short.
The root of the E-tree appears as the first and as the last vertex of its E-Tour.
The length of a tour of an E-tree $T$ is $ELength_{T} = 4(|T|-1)$, the endpoints of each edge appear twice in the E-tour.
See Figures \ref{fig:EulerTour-insertion} and \ref{fig:EulerTour-deletion}   for examples.
As the preprocessing shares similarities with the edge insertion, we postpone the description of the preprocessing after describing the update procedure to restore the E-tour after an edge insertion or deletion.

\begin{figure*}
	\vspace{-0.3cm}
	\begin{center}
		\includegraphics[trim={1cm 9cm 2cm 1cm}, clip=true, width=0.88\textwidth]{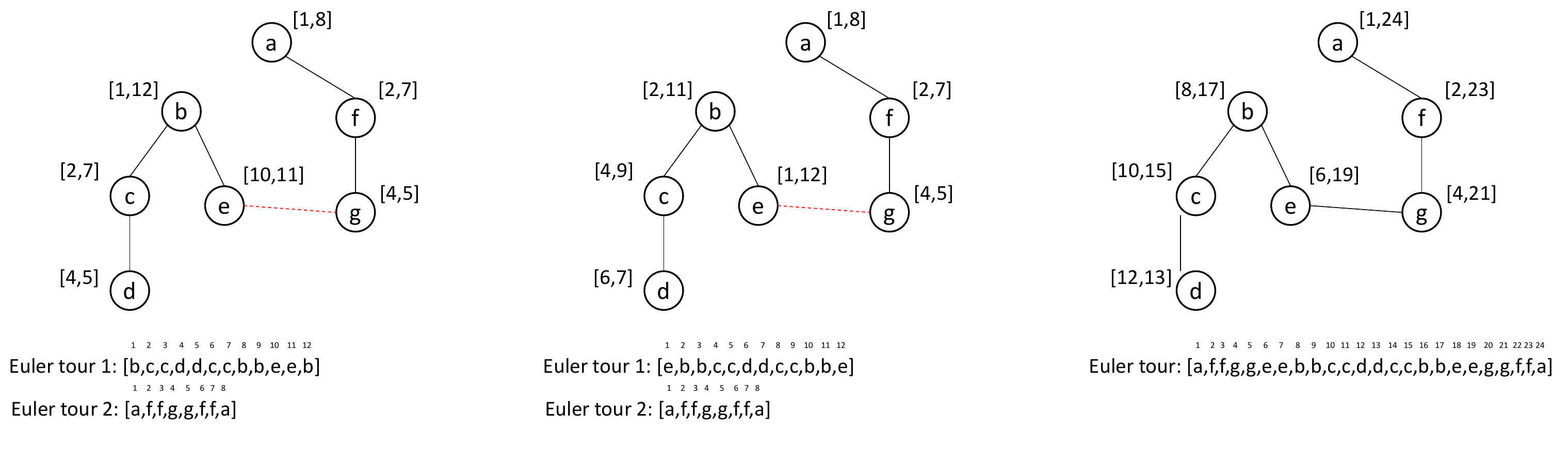}
	\end{center}
	\vspace{-0.6cm}
	\caption{(i) A forest and an E-tour of each of its tree below. The brackets represent the first and the last appearance of a vertex in the E-tour. (ii) The E-tour after setting $e$ to be the root of its tree. (iii) The E-tour after the insertion of the edge $(e,g)$.}
	\label{fig:EulerTour-insertion}
\end{figure*}

\begin{figure*}[t!]
	\vspace{-0.3cm}
	\begin{center}
		\includegraphics[trim={0.8cm 9cm 5cm 1cm}, clip=true, width=0.88\textwidth]{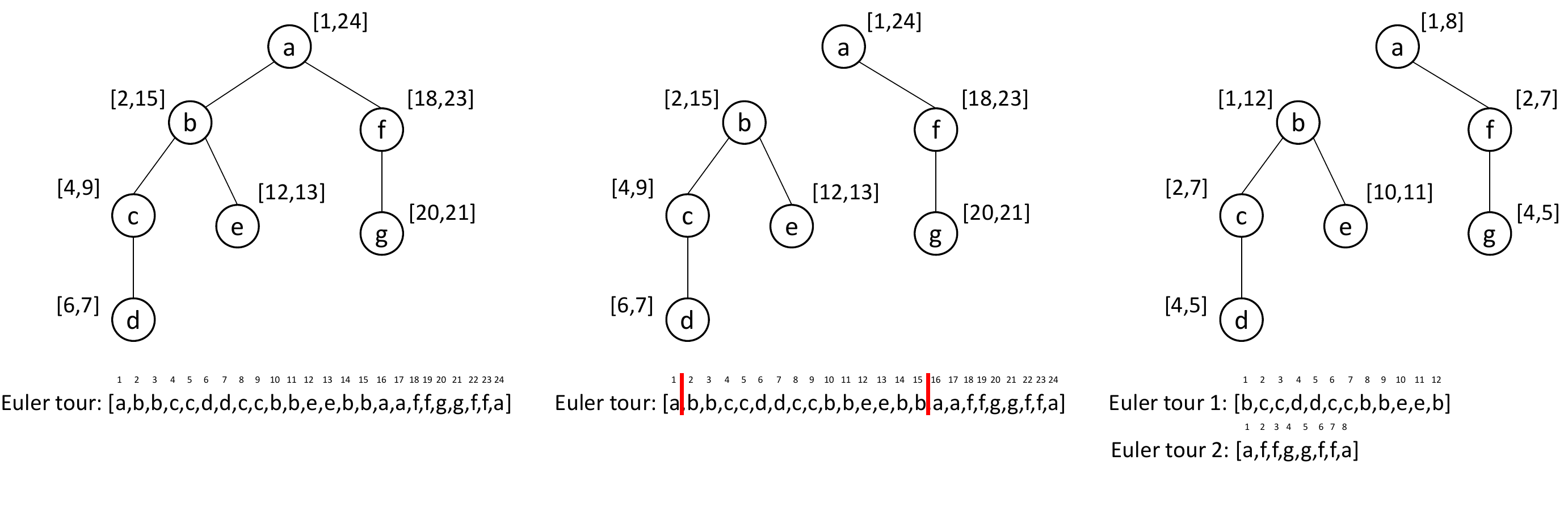}
	\end{center}
	\vspace{-0.6cm}
	\caption{(i) A tree and an E-tour of the tree below it. The brackets represent the first and the last appearance of a vertex in the E-tour. (ii) An intermediate step of the update of the E-tour after the deletion of the edge $(a,b)$. The red lines in the E-tour indicate the split points of outdated E-tour. (iii) The E-tour after the deletion of the edge $(a,b)$.}
	\label{fig:EulerTour-deletion}
	
	\vspace{-0.2cm}
\end{figure*}

We assume that just before an edge update, we maintain for each connected component of the graph a spanning tree, and an E-tour of the spanning tree.
Using vertex-based partitioning we distribute the edges across machines, and each vertex is aware of the ID of its component, and together with each of its edges we maintain the ID of the component that it belongs to and the two indexes in the E-tour (of the tree of the component) that are associated with the edge. 
Moreover, we maintain with each vertex $v$ the index of its first and last appearance in the E-tour of its E-tree, which we denote by $f(v)$ and $l(v)$.
We denote by $index_v$ the set of all indexes that $v$ appears in the E-tour of $T$. 
Note that $|index_v| = 2\cdot d_T(v)$ %times 
in the E-tour, where $d_T(v)$ is the degree of $v$ in the corresponding E-tree $T$.
We do not explicitly store $index_v$, this is implicitly stored with each vertex as information on $v$'s edges.
Therefore, we perform updates on the indexes in $index_v$ but it is actually the indexes that are stored at the edges that are updated.
%\sidenote{Nikos: added the next sentence to address a comment.}
To update this information in a distributed fashion, we leverage the properties of an E-tour which allows us to change the root of an E-tree, merge two E-trees, and split an E-tree, by simply communicating the first and last indexes of the new root, or the endpoints of the inserted/deleted edge. 

\smallskip\noindent\textbf{Handling updates.}
The main idea to handle updates efficiently is that the E-tour of the spanning trees can be updated efficiently without requiring a lot of communication. For instance, one can change the root of an E-tree, and update all the information stored in the vertices of that tree, by sending $O(1)$-size messages to all vertices. Moreover, we can test whether a vertex $u$ is an ancestor of a vertex $v$, in their common E-tree, using only the values $f(u)$ and $l(u)$ and $f(v)$ and $l(v)$.
The insertions and deletions of edges in the graph are handled as follows.

\paragraph{$insert(x,y)$:} 
If $x$ and $y$ are in the same connected component, we simply add the edge to the graph. Otherwise, we proceed as follows. 
We first make $y$ the root of its E-tree $T_y$ (if it is not already). 
Let $ELength_{T_y} = 4(|T_y|-1)$ denote the length of the E-tour of $T_y$. 
For each vertex $w$ in $T_y$ we update each index $i
\in index_w$ to be $i = ((i + ELength_{T_y}-l(y))\mod ELength_{T_y}) + 1$. 
%Moreover, we we denote by $f'(v)$ and $l'(v)$ maintain the new first and last appearances of $v$ in the E-tour of its E-tree.
These shifts of the indexes of $w$ correspond to a new E-tour starting with the edge between $y$ and its parent, where the parent is defined with respect to the previous root of $T_y$. 
Second, we update the indexes $i\in index_w$ of the vertices $w \in T_y$ to appear after the first appearance of $x$ in the new E-tour. 
For each vertex $w$ in $T_y$ update each index $i\in index_w$ to be $i = i + f(x) + 2$. 
Third, set $index_y = index_y \cup \{f(x)+2, f(x)+l(y)+3\}$ and 
$index_x = index_x \cup \{f(x)+1, f(x)+l(y)+4\}$, where $l(y)$ is the largest index of $y$ in the E-tour of $T_y$ before the insertion of $(x,y)$. 
Finally, to update the indexes of the remaining vertices in $T_x$, for each $i\in index_w$ where $i>f(x)$ we set $i=i+4\cdot ELength_{T_y}$. 
See Figure \ref{fig:EulerTour-insertion} for an illustration.

Notice that the only information required by each vertex $w$ to perform this update, besides $index_w$ which is implicitly stored on the edges of $w$ and $f(w)$, is $ELength_{T_y}, f(y), l(y), f(x), l(x)$, which can be sent to all machines via a constant size message from $x$ and $y$ to all other machines.
Notice that $x$ and $y$ do not need to store $f(x),l(x)$ and $f(y),l(y), ELength_{T_y}$, respectively, as they can simply learn those by sending and receiving an appropriate message to all machines. Hence each insertion can be executed in $O(1)$ rounds using all machines and $O(\sqrt{N})$ total communication per round (as all communication is between $x$ or $y$ with all other machines, and contains messages of $O(1)$ size).

\paragraph{$delete(x,y)$:} 
If $(x,y)$ is not a tree edge %of a tree 
in the maintained forest, we simply remove the edge from the graph.
Otherwise, we first split the E-tree containing $x$ and $y$ into two E-trees, and then we reconnect it if we find an alternative edge between the two E-trees.
We do that as follows.
We check whether $x$ is an ancestor of $y$ or vice versa by checking whether $f(x)<f(y)$ and $l(x)>l(y)$.
Assume w.l.o.g. that $x$ is an ancestor of $y$ in $T_x$.
First, we set $index_x = index_x \setminus \{ f(y)-1, l(y)+1\}$ and $index_x = index_y \setminus \{ f(y), l(y)\}$ (that is, we simply drop the edge $(x,y)$).
Then, for all descendants $w$ of $y$ in $T_y$ (including $y$), for each $i\in index_w$ set $i = i - f(y)$, where $f(y)$ is the smallest index of $y$ before the deletion of $(x,y)$.
Update $|T_y|$ and allocate a new ID for the new connected component containing $y$.
Second, for all vertices $w \in T_x \setminus T_y$ and all $i\in index_w$ if $i>l(y)$ set $i=i-((l(y)-f(y)+1)+2)$, where $l(y)$ and $f(y)$ are the largest and smallest, respectively, index of $y$ before the deletion of $(x,y)$.
This is to inform all vertices that appear after $l(y)$ that the subtree rooted at $y$ has been removed, and hence the E-tour just cuts them off (the +2 term accounts for the two appearances of x in the E-tour because of $(x,y)$).
Finally, we find an edge from a vertex $v \in T_y$ to a vertex $w\in T_x$, and execute $insert(x,y)$.

Similarly to the case of an edge insertion, all of the above operations can be executed in a constant number of rounds as the only information that is required by the vertices are the ID of the components of $x$ and $y$, and the values $f(x),l(x),f(y),l(y)$, which are sent to all machines.
Moreover, the search of a replacement edge to reconnect the two trees of $x$ and $y$ can be done in $O(1)$ rounds as we only need to send the IDs of the two components to all machines, and then each machine reports at most one edge between these two components to a specific machine (specified also in the initial message to all machines).

\smallskip\noindent\textbf{Preprocessing.}
During the preprocessing, we compute a spanning forest $\mathcal{T}$ of the input graph and an E-tour on each tree $T \in \mathcal{T}$ with arbitrary roots.
%For each vertex $v$ store the ID of its connected component and the set of indexes indicating the appearances of $v$ in the E-tour of its tree $T$.
%We denote by $index_v$ the set of indexes that $v$ appears in the E-tour of $T$. 
%Note that $|index_v| = 2\cdot d_T(v)$ times in the E-tour, where $d_T(v)$ is the degree of $v$ in the maintained E-tree $T$.
We build on top of the $O(\log n)$ randomized algorithm that computes a spanning forest of a graph by iteratively identifying small neighborhoods to contract into single vertices and at each iteration reduces the number of vertices by a constant fraction~\cite{ahn2012analyzing}. 
It is instructive to view the contraction process as merges of connected component that are build-up throughout the execution, where initially all components are singleton vertices.
We augment this algorithm to maintain a spanning tree in each component, as well as an E-tour of each spanning tree.
We do this as follows.
Using vertex-based partitioning we distribute the edges across machines, and each vertex is aware of the ID of its component, and together with each of its edges we maintain the two indexes in the E-tour (of the tree of the component) that are associated with the edge as well as the ID of the component containing the edge. 
At each iteration, several components might merge into one, but all such merges have a common component to which they are contracted; we call this component the \emph{central component} of the merge. 
Whenever two, or more, components merge into one, they all get the ID of the central component.
Each of the components that merge to the central component uses a single edge to merger their spanning tree as a subtree of the spanning tree of the central component.
Let $C_1, C_2, \dots, C_l$ be the components that merge and w.l.o.g. let $C_1$ be the central component of the merge. 
Moreover, let $e_2, \dots, e_l$ be the edges the non-central components use to connect to the central component $C_1$.
Our plan is to simulate the sequence of edge insertions of $e_2, \dots, e_l$ within a constant number of rounds.

First, in parallel for each component $C_i \in \{C_2, \dots, C_l\}$ with connecting edge $e_i = (x_i,y_i), x_i \in C_1, y_i = C_i$, we set its root to be $y_i$. 
This is, essentially, the first step of the insert $e_i$ operation.
Second, we store the tree edges of all components $C_1,\dots,C_l$ into $O(\sqrt{N})$ machines, and we sort them based on the smallest of the indexes of their endpoints. 
(Sorting can be done in $O(1)$ rounds as shown in \cite{goodrich2011sorting}.)
The sorting implies an order of the machines storing the ordered edges; let $M_1, \dots, M_{q}$ be these machines.
For each component $C_i$ with connection edge $e_i=(x_i,y_i), x_i \in C_1, y_i = C_i$, we send the size of the E-tour of $C_i$ (which is $4(|C_i|-1)$), to the machine (among the machines $M_1, \dots, M_{q}$) storing the index $f(x_i)$ and we associate it with that index (it can be part of the message).
If more than one trees connect to the same vertex, we impose a total order among them defined by the IDs of the other endpoints of the connection edges of the components, and for each component $C_j$ in this order, we compute the sum $\psi(C_j)$ of sizes of the components before $C_j$ in that order. If there is a single component $C_j$ connecting to a vertex, then its $\psi(C_j)=0$. (The $\psi$ values are used in the final step of each iteration.)
Within each machine $M_i, 1 \leq i \leq q$ we sum the sizes that were sent to indexes stored at $M_i$ in the previous step, and we send this information to all machines $M_j, i<j\leq q$.
(Each machine sends a message of constant size to each other machine. Hence, all messages can be sent in one round.)
In tandem, we also sum the values on the messages, of the same type, that are received at machine $M_i$ from machines $M_p, 1\leq p < i$.
Now we can compute for each index $i$ of an edge $e=(w,z)$ in $C_i$ the sum of sizes $\phi(i)$ of components that are attached as subtrees to vertices $w$ with smaller value $f(w)<f(v)$ (here we also consider those components that were attached to indexes stored in $M_i$).
This allows use to execute the final step of the procedure of inserting an edge in parallel for all conflicting component merges.
In particular, for each index $j$, we set $j=j+4 \phi(j)$.
Finally, we execute the second step of the process of inserting an edge. 
That is, for each component $C_i, i>1$ with connection edge $e_i=(x_i,y_i), x_i \in C_1, y_i = C_i$, and each index $j$ of an edge in $C_i$ we set $j=j + f(x_i) + 4 \phi(x_i) + 4 \psi(C_i) + 2$.
%When a merge occurs, super-vertices $v$ that are contracted into the super-vertex $u$ notify $u$ by sending the original edge that they use to make the contraction. Then, the spanning tree $T_u$ of the vertices that were contracted into $u$ and its E-tour are updated to reflect the new linking operations with the other trees. %This is essentially step 2 of the insert operation, which requires only the vertex to which the new subtree is attached and also the size of the subtree.
%Finally, each endpoint of $T_u$ to which a new subtree is attached must decide how to distribute the interval between its first and last appearance on the E-tour among the new subtrees that are attached to it.
All additional steps of the base algorithm can be executed in $O(1)$ rounds, and hence the whole preprocessing can be executed in $O(\log n)$ rounds. %\sidenote{NP: do we need a proof for that?}.

\subsection{Extending to $(1+\epsilon)$-approximate MST}
To maintain a minimum spanning tree instead of a spanning tree, we use the dynamic spanning tree algorithm with the following two changes.
%There are two changes that needs to be done to adjust the algorithm that maintains a spanning tree.
First, whenever an edge $(x,y)$ is added and the two endpoints are already in the same tree, we compute the edge $(u,v)$ with the maximum weight among all the edges whose both endpoints are ancestors of either $x$ or $y$ (but not both) and we compare it to the weight of $(x,y)$ (these tests can be done efficiently using the E-tree). We only keep the edge with the minimum weight among $(u,v)$ and $(x,y)$.
Second, at Step 3 of the $delete$ operation, instead of adding any edge between the two trees, the algorithm adds the minimum among all such edges.

The preprocessing can be adjusted to compute a $(1+\epsilon)$-approximate MST by doing bucketization, which introduces only a $O(\log n)$ factor in the number of rounds. In fact, it is enough to bucket the edges by weights and compute connected components by considering the edges in bucket of increasing weights iteratively and separately.%(like, e.g., in \cite{} \sidenote{NP: citation needed})

\ignore{
\section{Fully-dynamic $O(1)$-approximate matching with logarithmic round and communication bounds.}

We present an $O(1)$-approximate fully-dynamic maximum matching with $O(\log \Delta)$ rounds per update and $O(\log \Delta)$ active machines per round in the worse case.

\subsection*{Overview of the static algorithm}
Let $\Delta$ be the maximum degree in the graph. 
The algorithm proceeds with $\log \Delta$ rounds. 
At the $i$-th round, for each node $v$ with degree larger than $\Delta/{2^{i+1}}$ sample each edge incident to $v$ with probability $p_i = 2^{i}/{4\Delta}$. 
For convenience, we call the nodes with degree larger than $\Delta/{2^{i+1}}$ as \emph{high-degree} nodes at level $i$.
Every isolated sampled edge (edge whose endpoints have no other incident sampled edge) is added to the matching and its endpoints are removed from the graph.
We remove from the graph also all high-degree nodes at level $i$.

At the end of the $\log \Delta$ rounds there are no edges left in the graph. 
It can be shown that the above algorithm produces an $O(1)$-approximate maximum cardinality matching. 
In fact, at the $i$-th iteration of the algorithm the isolated sampled edges match a constant factor of the high-degree nodes.
This is achieved since the sampling probability ensures that with constant probability each high-degree node is matched.
Notice that this is sufficient to show that the algorithm produces a constant factor approximation since for each threshold of degrees $[\Delta/{2^{i+1}},\Delta/{2^i}]$ the nodes that were matched at previous rounds only increase the fraction of the nodes in that threshold that are matched.

\subsection*{Dynamic algorithm}
We build on the static algorithm described above. 
We execute the static algorithm on the initial graph, before any edge update occurs. 

In a high-level, we attempt to maintain a simulation of the static algorithm over the sequence of updates.
However, our algorithm goes beyond that and attempts to match more high-degree nodes at each round by trying to match even nodes with sampled degree larger than one. 

We say that a node $v$ is at level $i$, if $v$ was either high-degree node at round $i$ or it was sampled at round $i$. We denote the level of a $v$ by $lvl(v)$.
Moreover, as the nodes change their level throughout the updates, we need to keep track of the edges leading to nodes of the same level or of larger level; we call these the \emph{alive neighbors} of the node.
Finally, at each node we also maintain a 2-approximation of the degrees of its neighbors (notice, that to maintain exact counters might require $O(\Delta)$ active machines per round to communicate the updated degree of a node to its neighbors).

Before describing the update procedures, we first describe a supporting method which is then conveniently used during the updates. 
The procedure starts from an unmatched node $v$ at level $i$ and samples from its neighbors with approximate degree at most $\Delta/2^{i}$ with  probability $\log \Delta 2^{i}/{4\Delta}$. 
The $\log \Delta$ factor is to ensure that with high probability we hit a constant fraction of nodes that are truly at level $\geq i$ or above.
If at least one sampled neighbor $w$ is free, then we insert $(v,w)$ into the matching, otherwise, we select a neighbor at a level strictly higher than $lvl(v)$, we unmatch $w$ and we call recursively the procedure on the old mate of $w$.
Although, $v$ is not aware of the level of its neighbors (as it depends not only on their degree, but also on whether some node with higher degree sampled an edge to that node), it can spend one round to learn the level of its sampled neighbors (constant number). \sidenote{NP: It is not clear among which edges $v$ should sample because it might be the case that too many of its many neighbors are matched to nodes from lower level. 
	Hopefully, we can sample among the ones that are of equal or lower degree, and then argue that not too many of them are of lower level (because they we matched to someone from a higher degree).
}
We refer to this procedure as $resample(v)$. 

\paragraph*{$ResampleBelow(v)$.}
\begin{itemize}
	\item Let $i = lvl(v)$, and let $N^{\leq i+1}(v)$ be the neighbors of $v$ with 2-approximate degree at most $2^{i+1}$.
	\item Sample each node from $N^{\leq i+1}(v)$ with probability ${\log \Delta 2^{i+1}}/{4\Delta}$. 
	\item Verify the level and the status of the sampled nodes. Let $S$ be the set of nodes whose level is $\leq i$.
	\item Let $w$ be an unmatched node in $S$ such that $lvl(w) \leq i$. If no such node exists, let $w$ be a node in $S$ such that $lvl(w)<i$.
	If no such node exists, set $w=\emptyset$.
	\item If $w\not=\emptyset$, $lvl(w)=i$ and $w$ unmatched, add $(v,w)$ into the matching. If $w\not=\emptyset$, $lvl(w)=i$ and $w$ matched, do nothing.
	\item If $w\not=\emptyset$ and $lvl(w)<i$, then add $(v,w)$ into the matching. If $w$ was previously matched free its mate $z$ and call $Resample(z)$ \sidenote{NP: notice here that $lvl(z)<i$ as otherwise the level of $w$ would be also smaller.}.
\end{itemize}

Let $e=(x,y)$ be the edge to be inserted. 
We first update the degree of $x,y$ and then simulate the static algorithm with the updated degrees of $x$ and $y$.

\paragraph*{$UpdateLevel(v)$}
\begin{itemize}
	\item If $d'(v) > 2^{lvl(v)}$.
	\begin{itemize}
		\item Increase the level of $v$ by one, i.e., $lvl(v) = lvl(v)+1$.	
		\item Call $ResampleBelow(v)$. 
	\end{itemize}
	\item If $d'(v) < 2^{lvl(v)-1}$.
	\begin{itemize}
		\item Decrease the level of $v$ by one, i.e., $lvl(v) = lvl(v)-1$.	
		\item For each node $w$ in the set $N^{i-1}(v)$ of neighbors of $v$ with level $i-1$, sample $w$ with probability $2^{i-1}/{4\Delta}$. 
		\item if a sampled node is unmatched add $(v,w)$ into the matching and set $lvl(v)=lvl(v)-1$. Otherwise, call $ResampleBelow(v)$. 
	\end{itemize}
\end{itemize}

\paragraph*{$AddEdge(x,y)$}
\begin{itemize}
	\item If $lvl(x) >= lvl(y)$. 
	\begin{itemize}
		\item If $x$ is matched, do nothing.
		\item If $x$ is unmatched, sample $e$ with probability $2^{lvl(x)}/{4\Delta}$.
		\item If $e$ is not sampled do nothing.
		\item If $e$ is sampled, $lvl(y)\leq lvl(x)$, and $y$ is unmatched add $(x,y)$ to the matching.
		\item If $e$ is sampled, $lvl(y)<lvl(x)$, and $y$ is matched add $(x,y)$ to the matching, call $ResampleBelow(x)$.
	\end{itemize}
\end{itemize}

\paragraph*{$DeleteEdge(x,y)$}
\begin{itemize}
	\item If $(x,y)$ is in the matching, remove it.
	\item Call $updateLevel(x)$ and $updateLevel(y)$.
\end{itemize}
}
%\clearpage
%\newpage

%\clearpage
%\newpage

%\appendix
\section{Maintaining a (2+$\epsilon$)-approximate matching}\label{app:mat2}

In this section we adapt the algorithm by Charikar and Solomon~\cite{charikar2018fully} to get a $(2+\epsilon)$-approximate matching algorithm with $O(1)$ number of rounds per update, $\widetilde{O}(1)$ communication per round, and $\widetilde{O}(1)$ active machines per round.
Although the algorithm from \cite{charikar2018fully} needs small modifications to achieve the above bounds in our model, these are essential as the original algorithm relies on the fact that it is executed sequentially. 
We first give an overview of the algorithm, and then show how one can resolve the issues that arise from the distributed implementation of the algorithm. 

\subsection{Overview of the sequential algorithm}
The algorithm by Charikar and Solomon \cite{charikar2018fully} builds on the framework established by Baswana, Gupta, and Sen~\cite{baswana2011fully} that was designed for fully-dynamic maximal matching with $O(\log n)$ amortized update time.
For ease of presentation, we first very briefly describe the framework from \cite{baswana2011fully} and then the modified version in \cite{charikar2018fully}.
The set of vertices is decomposed into $\log_{\gamma}n+2$ levels, $\gamma \in O(\log n)$.
The unmatched vertices are assigned level $-1$, while the matched vertices are assigned to levels $[0, \dots, \log_{\gamma}n]$, where $\gamma = \theta(n)$.
Denote the level of a vertex $v$ as \lvl{v}.
Let $v$ be a matched vertex and $e=(u,v)$ be the edge of the matching that is adjacent to $v$.
Roughly speaking, the level of $v$ in the level-decomposition is the logarithm (with base $\gamma$) of the cardinality of the sampling space from which $e$ was selected uniformly at random, that is, the fact that $\lvl{v}=\ell$ implies that $e$ was selected uniformly at random among at least $\gamma^{\ell}$ edges.
We refer to the cardinality of the sampling space from which an edge $e$ was selected as the \emph{support} of $e$.
Notice that while neighbors of a vertex $v$ get deleted the support of the edge $e$ reduces, but insertions of new neighbors of $v$ do not increase the support of $e$ as they were not an option when $e$ was sampled.
The aforementioned grouping of the vertices and their adjacent matched edges serves as an estimation of the number of updates needed to delete an edge of the matching at each level. That is, a matching edge at level $\ell$ is expected to be deleted after, roughly, $\gamma^{\ell}/2$ deletions of edges adjacent to $v$.
Moreover, the algorithm maintains an orientation of the edges in the graph where each edge between two vertices $u$ and $v$ is oriented from the vertex with higher level to the vertex of lower level; ties are broken suitably by the algorithm.
The outgoing edges from a vertex $v$ are stored in a list $\outset{v}$, while for the incoming edges of a vertex the algorithm maintains a partition of the edges into lists based on their level, that is, the incoming edges of $v$ at level $\ell \geq \lvl{v}$ are stored in $\inset{v}{\ell}$.
Notice that the more refined maintenance of the incoming edges of a vertex allows vertex $v$ to traverse only the incoming edges at a specific level, while such a process for the outgoing edges requires the traversal of the whole list $\outset{v}$.
At this point it is useful to define the quantity $\lowneighbors{v}{\ell}$ which represents the number of neighbors of vertex $v$ at levels $1$ through $\ell-1$. This is mainly used in the algorithm in \cite{charikar2018fully}.

The algorithm maintains the following invariants during its execution.
\begin{itemize}
    \item[(i)] Any matched vertex has level at least 0.
    \item[(ii)] The endpoints of any matched edge are at the same level, and this level remains unchanged until the edge is deleted from the matching. 
    \item[(iii)] All free vertices have level -1 and out-degree 0. (This guarantees that the matching is maximal.) 
    \item[(iv)] An edge $(u,v)$ with $\lvl{u} > \lvl{v}$ is oriented by the algorithm from $u$ to $v$. In case where $\lvl{u} = \lvl{v}$, the orientation is determined suitably by the algorithm.
\end{itemize}

Whenever an edge is deleted from the matching, the algorithm places each endpoint of the deleted edge at a level $\ell$ such that it can pick an incident matching edge among a pool of $\gamma^{\ell}$ candidate edges. 
%Whenever an edge is picked to be added to the matching, it might remove some other edge from the matching, generating
We avoid giving the details on how to update the maintained data structures after an edge insertion or deletion, as these details are out of the scope of this paper. % given that we use the algorithm from \cite{charikar2018fully} whose details we give later.
Roughly speaking, the main idea of the analysis in \cite{baswana2011fully} builds on the fact that to remove a matching edge $e=(u,v)$ at level $\ell$, the adversary needs to delete $O(\gamma^{\ell})$ many edges, which allows the algorithm to accumulate enough potential to restore the maintained invariants by reassigning levels to $u$ and $v$ and update the data structures $\outset{\cdot}$ and $\inset{\cdot}{\cdot}$ of $u$ and $v$ and their affected neighbors.
The bottleneck of the algorithm is in maintaining the data structures $\outset{\cdot}$ and $\inset{\cdot}{\cdot}$ throughout the updates.
With our model, each machine contains local computational power and can send messages in batches to the neighbors of a vertex stored at the machine.
This allows one to update the affected data structures in batches, by simply sending and receiving the appropriate information from each endpoint of the inserted/deleted edge to their neighbors and each individual vertex updates the data structures concerning themselves.
That is, if a vertex changes level, it can learn it can update all the relevant data structure in $O(1)$ rounds using a number of machines and communication that is analogous to the number of neighbors of the vertex.

The algorithm from \cite{charikar2018fully} maintains a relaxed version of the invariants that are maintained by \cite{baswana2011fully}.
As the authors argue themselves, in order for the algorithm to have a subpolynomial worst-case update time it is necessary to be proactive with respect to deletions of matched edges.
More specifically, the authors present a scenario where the adversary can force the algorithm to reduce drastically the support of many edges of the matching, and then remove many edges of the matching that have reduced support, which forces the algorithm to perform a polynomial time computation within few updates.
Charikar and Solomon \cite{charikar2018fully} deal with such situations by designing an algorithm that ensures that at any point in time every edge of the matching at level $\ell$ is sampled uniformly at random from a relatively large sample space (i.e., $\Omega((1-\epsilon)\cdot \gamma^{\ell})$).
That is, the algorithm maintains a relatively large support for each edge of the matching independently of the adversarial deletions.
This is done to keep low the probability of the adversary ``hitting'' an edge of the matching at level $\ell$, and thus, at any point in time only few edges might be deleted from the matching by the adversary.

As hinted in the discussion of the algorithm from \cite{baswana2011fully}, a deletion of an edge from the matching at level $\ell$ can trigger an update that requires $\Omega(\gamma^{\ell})$ time in order to place the endpoints of the deleted edge in the right level and try to match them with another vertex in their neighborhood.
The algorithm from \cite{charikar2018fully} uses a similar approach, with the difference that each long update is executed in small batches of operations, where each batch is executed after an edge update and performs a polylogarithmic number of operations. 
More specifically, each batch contains either $\Delta = O(\log^5n)$ or $\Delta' = \Delta \cdot \log n$ operations, depending on the type of update that is being performed.
In other words, a long process is simulated over polynomially many adversarial edge insertions/deletions.
The period during which the algorithm remains active after an edge insertion or deletion is called \emph{update cycle}.
At a high level, the algorithm ensures a low-enough probability of deleting an edge of the matching which, in turn, allows it to process such a deletion in many update cycles, without having to deal with many such deleted edges simultaneously, with high probability. 
This is achieved by proactively deleting edges of the matching that have low support and then trying to match again the newly free endpoints of the deleted edges; the endpoints of deleted edges by the algorithm are called \emph{temporarily free} vertices.
In addition, to ensure low-enough probability of an adversarial deletion of a matched edge, the algorithm runs several procedures that remain active throughout the sequence of the edge insertions/deletions (one of which  keeps removing edges with low support).
These procedures are called \emph{schedulers}, and each such scheduler is responsible for ensuring different invariants that are maintained throughout the algorithm.
The algorithm executes for each level $-1,0,\dots,\log_{\gamma}n$ of level-decomposition a copy of a scheduler from each type. Each of those copies is called a \emph{subscheduler}, and all subschedulers of the same type are managed from the same scheduler. 
Hence, there are $O(\log_{\gamma}n)$ subschedulers managed by a constant number of schedulers.

Since long updates are executed in many small batches, it is unavoidable that at each round there exist vertices that are in the process of being updated. 
These vertices are called \emph{active} and they are maintained in a list throughout the execution of the algorithm; we call this list the \emph{active list} and denote is by $\mathcal{A}$. 
It is shown that at any point in time there are at most $O(\log n)$ active vertices, with high probability.
The algorithm also maintains the vertices that become free due to adversarial edge deletions. Such vertices are maintained in lists based on the level of the deleted edges, i.e., the algorithms maintains a list $Q_{i}$ at each level $i$.
Recall that the algorithm breaks down the execution of each process in batches of size $\Delta$ or $\Delta'=\Delta\cdot\log n$.
The size of each batch depends on the procedure that initiated the execution and not on the level of the subscheduler that runs the batch; that is, for the batches handled by the same scheduler is uniform across the different levels.
Hence, the execution of a procedure by a subscheduler at level $\ell$, which requires $T_\ell$ time, is carried out in $T_\ell / \Hat{\Delta}$, where $\Hat{\Delta}\in \{\Delta, \Delta'\}$. 
The level-decomposition ensures that a procedure that is executed by a subscheduler at level $\ell$ requires at least as many rounds as any process at levels $\ell' < \ell$.
As a consequence, during an execution of a process at level $\ell$, possibly many processes at lower levels are executed.

Before describing the schedulers that are used by the algorithm, we first review three supporting procedures.
In what follows, similarly to \cite{charikar2018fully}, we assume that the length of the update sequence is limited to $O(n^2)$, and that the maintained matching has size $\Omega(\log^5 n/ \epsilon^4)$. 
These assumptions can be easily removed. %, as shown in \cite{charikar2018fully}. 

\paragraph{The authentication process.}
While updating the data structures of a vertex $v$, some of its neighbors potentially change their level multiple times. 
This happens because a procedure handling a vertex at level $\ell$ compared to a procedure handling a vertex at level $\ell' < \ell$ takes $\gamma^{\ell - \ell'}$ times more time (as the exact difference depends on the type of the processes being carried out).
Hence, at the end of the execution of the process handling vertex $v$, vertex $v$ might not be updated about the level of some of its neighbors.
To deal with this, the algorithm keeps track of the neighbors of $v$ that change their level, and acts upon such changes. 
This is implemented efficiently as follows. 
At the end of the execution of a procedure handling a vertex $v$, the algorithm iterates over the list of active vertices and for each active vertex $z$ the information of $v$ about $z$ is being updated.
Since two procedures might have a very different execution times, we also need to take care of the scenario where a neighbor $w$ of $v$ enters and leaves the active list $\mathcal{A}$ before $v$ finishes its execution. 
However, just before $w$ leaves $\mathcal{A}$, both $v$ and $w$ are active, and hence, it suffices to scan the active list $\mathcal{A}$ and for each neighbor $z$ of $w$ that is active (that includes $v$), update their information about $w$.
Since $|\mathcal{A}| = O(\log n)$, searching for all neighbors of a vertex in the list and updating their mutual information takes $O(\log^2 n)$ time, which means that it can be executed in a single batch (i.e., it should not be simulated in multiple update rounds).
In our model, this can be implemented in $O(1)$ rounds using only $\widetilde{O}(1)$ machines per round.

\paragraph*{Procedure $\setlevel(v,\ell)$.}
This supporting procedure is responsible to set the level of $v$ to be $\ell$ and to update all the necessary data structures of $v$ and its affected neighbors. This procedure is called by the various subschedulers to facilitate the level change that is associated with them. Notice that the level $\ell$ to which $v$ is set is not determined by the procedure itself, but by its caller.
We refer to this process as the rising, or falling, of $v$ depending on whether $\lvl{v}<\ell$ or $\lvl{v}>\ell$, respectively, where $\lvl{v}$ is the level of $v$ before the call of \setlevel{} procedure.
This process is executed by a level $\hat{\ell}=\max\{\ell,\lvl{v}\}$ subscheduler.
The procedure runs in a total of $O(\gamma^{\hat{\ell}})$ time, which is executed in batches of size $\Delta$ or $\Delta'$ (depending on the subscheduler calling it).

The procedure starts by storing the old level of $v$ (that is, $\ell_v^{old} = \lvl{v}$), and setting $\lvl{v} = \ell$.
Then it updates the vertices in $\outset{v}$ about the new level of $v$, that is, for each vertex $w\in \outset{v}$ such that $\lvl{w}<\ell$ it moves $v$ from $\inset{w}{\ell_v^{old}}$ to $\inset{w}{\ell}$.
Next, depending on whether $v$ is rising of falling, we further need to flip the outgoing (resp., incoming) edges of $v$ with its appropriate neighbors to restore the invariants of the level-decomposition. 
In the case where $v$ is falling, that is $\ell < \ell^{old}_v$, for each vertex $w\in \outset{v}$ such that $\ell<\lvl{w}\leq \ell_v^{old}$ we move $w$ from $\outset{v}$ to $\inset{v}{\lvl{w}}$ and $v$ from set $\inset{w}{\ell_{v}^{old}}$ to $\outset{w}$.
We further need to update the value $\lowneighbors{w}{i}$, 
%\sidenote{PINO: Did we define $\lowneighbors{w}{}$ before? May be I just missed it...}
for all $w\in \outset{v}$ and all $\ell+1 \leq i \leq \ell_{v}^{old}$.
We again do that by iterating through the set $\outset{v}$ and for each edge we increment all appropriate counters.
The procedure is analogous for the case where $v$ is rising.

Recall that the $O(\gamma^{\Hat{\ell}})$ time required by  procedure \setlevel, to change the level of vertex $v$ from $\ell_{v}^{old}$ to $\ell$ where $\hat{\ell} = \max \{\ell, \ell_{v}^{old}\}$, is executed in batches of size $\Hat{\Delta}\in \{\Delta, \Delta'\}$.
In our distributed implementation of the algorithm we will execute all $\Delta$ operations of each batch of procedure \setlevel{} in one round.
To do so, we notice that all updates in the data structures of $v$ and its neighbors are independent from each other, that is, the updates made in the data structure of each neighbors $w\in \outset{v}$ do not depend on the preceding or succeeding updates to other neighbors of $v$. Hence, we can execute all of them in parallel. 
We can use the machinery developed in Section \ref{sec:maximal-matching} to identify to which machine each message should be delivered.
%However, as we aim to keep the communication cost at each round low, we only execute $\Hat{\Delta} \in \{\Delta, \Delta'\} \in \widetilde{O}(1)$ of those updates in parallel at each round.
In other words, the $\Hat{\Delta}$ operations that are executed by each subscheduler at each update round are performed in $O(1)$ MPC rounds.

\paragraph*{Procedure \handlefree$(v)$.}
This procedure is responsible for handling a temporarily free vertex $v$. The procedure first identifies the highest level $\ell$, such that $\lowneighbors{v}{\ell} \geq \gamma^\ell$ (recall that $\lowneighbors{v}{\ell}$ is the number of neighbors of $v$ in level strictly lower than $\ell$), and then the corresponding set $S(v)$ of non-active neighbors of $v$ at levels lower than $\ell$. 
Then the procedure samples uniformly at random a vertex $w$ from $S(v)\setminus \mathcal{A}$ as the new mate of $v$ in the matching. 
To match $v$ with $w$ the procedure does the following. It first unmatches $w$ from its former mate $w'$, then $v$ and $w$ are set to level $\ell$ using calls to \setlevel$(v,\ell)$ and \setlevel$(w,\ell)$ and adds edge $(w,v)$ into the matching. Finally, if $w$ was previously matched with $w'$ the procedure \handlefree$(w')$ is called recursively.
The running time of \handlefree$(v)$ is bounded by $O(\gamma^{\lvl{v}}\log^4n)$ (see \cite{charikar2018fully} for the analysis), and it is executed in batches of size $\Hat{\Delta}$.
Note that also in this case we can execute all $\Hat{\Delta}$ operations in a constant number of rounds.

\paragraph*{Maintained invariants.}

The algorithm in \cite{charikar2018fully} maintains the following invariants: %throughout the execution of the algorithm:
\begin{itemize}
    \item[(a)] Any matched vertex has level at least 0. 
    \item[(b)] The endpoints of any matched edge are of the same level, and this level remains unchanged until the edge is deleted from the matching. (This defines the level of a matched edge, which is at least 0 by item (a), as the level of its endpoints.)
    \item[(c)] Any vertex of level -1 is unmatched and has out-degree 0.
    \item[(d)] An edge $(u, v)$, such that $\lvl{u} > \lvl{v}$ and $u$ and $v$ 
    %\sidenote{PINO: such that $u$ and $v$?} 
    are not temporarily free, is oriented as from $u$ to $v$.
    \item[(e)] For any level-$\ell$ matched edges $e$ with $T_\ell/\Delta \geq 1$ and any $t$, it holds that $|S_t(e)| > (1-2\epsilon)\cdot \gamma^{\ell}$.
    \item[(f)] For any vertex $v$ and any level $\ell > \lvl{v}$, it holds $\Phi_v(\ell) \leq \gamma^{\ell} \cdot O(\log^2 n)$ \label{inv:bounded-neighbors-up-to-a-level}
\end{itemize}

Notice that the invariants (a)--(d) are equivalent to the invariants (i)--(iv) of the algorithm from \cite{baswana2011fully} which are adapted to take into account the concept of temporarily free vertices. Invariant (e) formalizes the property of maintaining large support for all edges of the matching. Next we review the four schedulers that are used to maintain the invariants (a)--(f) of the algorithm.

\paragraph*{Scheduler \freeschedule.}
The \freeschedule{} scheduler handles the vertices that become temporarily free due to the adversary. Such vertices reside at $\log_{\gamma}n+1$ queues $\Q_0, \dots, \Q_{\log_{\gamma}n}$ at the different levels of the level-decomposition.
Each subscheduler at level $\ell$ iteratively removes a vertex $v$ from $\Q_\ell$ and calls \handlefree$(v)$, which runs in time $O(\gamma^{\ell})$, simulating $\Delta'$ steps with each update operation.
In \cite{charikar2018fully} the subschedulers at the different levels are executed in an order from the one at the highest level to the one at the lowest level.
In our adaptation to the DMPC model, the $\log_{\gamma} n$ \freeschedule{} subschedulers are executed in parallel. 
Each such subscheduler simulates $\Delta$ operations (in fact, their calls to \handlefree), and the total work by all subschedulers requires a constant number of MPC rounds.
However, this parallel execution of the subschedulers at different levels creates some conflicts that do not appear in \cite{charikar2018fully}, as the subschedulers are assumed to follow some predefined order of execution. We will show how these conflicts can be resolved later on.

\paragraph*{Scheduler \unmatchschedule.}
The goal of the \unmatchschedule{} subscheduler at level $\ell$ is to guarantee that the size of the support of each matched edge at level $\ell$ remains between $\gamma^{\ell}\cdot(1-\epsilon)$ and $\gamma^{\ell}$ (that is, invariant (e) of the algorithm).
Each subscheduler at level $\ell$ simply removes the level-$\ell$ matching edge $e=(u,v)$ of smallest sample space, and executes \handlefree$(u)$ and \handlefree$(v)$.
The computation that is triggered by a removal of a matched edge at level $\ell$ is bounded by $O(\gamma^{\ell})$, and it is executed in batches of $\Delta$ operations.
Each such batch contains exchange of information between $u$ and $v$ and $\Delta$ of their neighbors, and hence, can be executed in $O(1)$ rounds using $\widetilde{O}(1)$ machines and communication per round.

\paragraph*{Scheduler \riseschedule.}
Each subscheduler of this type at level $\ell$ ensures that for each vertex $w$ at level $\ell'<\ell$ it holds that $\Phi_w(\ell) \leq \gamma^{\ell} \cdot O(\log^2 n)$.
The subscheduler, each time identifies the vertex $w$ at level $\ell'<\ell$ with the highest value of $\Phi_w(\ell)$, removes the matching edge $(w,w')$ (if such an edge exists), and raises $w$ to level $\ell$ by executing \riseschedule$(w,\ell)$.
Finally, the subscheduler executes \handlefree$(w)$ and \handlefree$(w')$ to match both $w$ and $w'$. 
The execution of this subscheduler takes $T_\ell = O(\gamma^{\ell})$ time in 
batches of size $\Delta$, that is the subscheduler is executed from $O(\gamma^{\ell} / \Delta)$ update cycles.
Again, each batch of this update can be executed in a constant number of DMPC rounds.

\paragraph*{Scheduler \shuffleschedule.}
This scheduler at level $\ell$ each time picks a matching edge $e=(u,v)$ uniformly at random among the matching edges at level $\ell$, removes it from the matching, and executes \handlefree$(u)$ and \handlefree$(v)$ to try and match again the endpoints of the removed matching edge. The role of this scheduler is mainly to facilitate the proof in \cite{charikar2018fully} showing that the adversary has low probability of deleting a matched edge (the low probability is defined depending on the level of the matched edge). 
While the total time required by this subscheduler is $O(\gamma^{\ell})$, it is executed in batched of size $\Delta' = \Delta\cdot \log n$, which ensures that it runs faster than the \unmatchschedule{} by a logarithmic factor, for technical reasons as explained in \cite{charikar2018fully}.
This scheduler runs only for the levels $\ell$ for which $\gamma^{\ell}/\Delta'>1$ as otherwise each update of at level $\ell$ is executed immediately (not in multiple chunks) and hence the adversary does not have enough time to delete an edge during the execution of a subscheduler at this level.
In a same way as the other schedulers, each batch of this scheduler can be executed in $O(1)$ rounds using $\widetilde{O}(1)$ machines and communication per round.

\smallskip\noindent\textbf{Handling updates.}
Following the insertion of an edge $e=(u,v)$, the algorithm updates the relevant data structures in time $O(\log_{\gamma} n)$.
If both $u$ and $v$ are at level $-1$, the algorithm marks $e$ as matched edge and sets the level of $u$ and $v$ to be $0$ by calling \setlevel$(u,0)$ and \setlevel$(v,0)$. In \cite{charikar2018fully} it is shown that this process is enough to guarantee that all the invariants are maintained, and that an edge insertion can be handled in $O(\log^4n)$ time.
The above process can be simulated in $O(1)$ DMPC rounds, as all instructions involve exchanging information between $u$ and $v$ and their neighbors, as well as each vertex updating their part of the maintained data structures. 

To process the deletion of an edge $e=(u,v)$ we proceed as follows. If the edge does not belong to the matching, it is sufficient to update the relevant data structures which requires only $O(\log n)$ time.
On the other hand, if $e$ belongs to the matching we first remove it from the matching, add its endpoints in $\Q_{\lvl{u}}$.
The above process is sufficient, as the subscheduler \handlefree{} at level $\lvl{u}$ will handle the newly free vertices $u$ and $v$ appropriately.
Moreover, the process ensures that all the invariants maintained by the algorithm continue to be satisfied after this process.

As one can observe, the insertions and deletions of edges do not trigger any long update procedures (even when deleting edges of the matching!), but rather work together with the schedulers in maintaining the invariants (a)--(f) of the algorithm, which in turn ensures that the maintained matching is almost-maximal.
However, as the different subscheduler at the different levels do not communicate with each other but operate independently, there are some issues that arise if they try to process the same vertices.

\subsection{Conflicts between schedulers}
Here we deal with synchronization issues that arise from the fact that all subschedulers are working simultaneously at all times. These issues are called conflicts between subschedulers. We exhibit the conflicts that arise and show the modifications that need to be made in order to ensure that all invariants of the algorithm are satisfied at all times. Some of the modifications were already suggested in \cite{charikar2018fully}, however, we repeat them here for completeness of the algorithm.
In what follows we ignore the overhead added by updating the list $\mathcal{A}$ of active vertices.

\paragraph*{Sampling mates conflict.} The procedure \handlefree~$(v)$ at level $\ell$, as part of a call from its subscheduler, might pick as a new mate of $v$ a vertex that is already processed by some other subscheduler. However, this is not really an issue as the sampling avoids such vertices by sampling from $S(v)\setminus \mathcal{A}$, and the active list $\mathcal{A}$ contains all vertices that are being processed.

\paragraph*{Deleting unmatched edges conflict.}
A conflict may arise when \unmatchschedule{} or \shuffleschedule{} subschedulers try to remove a vertex that has already been removed from the matching. While the case where the processed edge has been removed at a previous round is not really a conflict, as once a subscheduler removes an edge from the matching it informs all other subschedulers in $O(1)$ rounds and using $O(\log n)$ machines, the case where subschedulers from different levels try to remove the same edge from the matching is problematic.
%To deal with this we use the next-in-line mechanism from \cite{charikar2018fully}. 
For each \unmatchschedule{} subscheduler we pick the top $2 \log n$ choices of edges to remove from the matching and for each \shuffleschedule{} subscheduler we pick $2\log n$ random edges to remove from the matching.
Next, all \unmatchschedule{} and \shuffleschedule{} subschedulers send their $2 \log n$ choices to the same machine, and that machine decides for which subscheduler removes which edges from the matching, by first processing the \unmatchschedule{} subschedulers in decreasing order of their level followed by the \shuffleschedule{} subschedulers in decreasing order of their level and for each subscheduler we assign the first unassigned choice among its list of $2 \log n$ choices.
Then, the machine communicates to the subscheduler their assigned edges, and hence no conflicts occur among the different subschedulers as each has a unique edge to delete from the matching.

\paragraph*{Match an active vertex conflict.}
A conflict arises if the next vertex chosen by \freeschedule{} subscheduler at level $\ell$ from a queue $\Q_\ell$ is active.
To deal with this issue we delay the scheduling of all the \freeschedule{} subschedulers at least one round (within the same update cycle) after the rest of the subschedulers so that they can send which vertices they mark active in order for them to be removed from the queues $\Q_{\ell}$.

\paragraph{Raising and falling conflicts.}
During subscheduler \riseschedule{}, at level $\ell$, the vertex $v$ that is picked to be raised might be already active.
We do not try to prevent this type of conflicts, as it is possible that we actually want to raise $v$ to level $\ell$ even though $v$ is already active, in order to satisfy the invariant (f) of the algorithm.
In particular, during the process where \riseschedule{} at level $\ell$ chooses a vertex $v$ to move it to level $\ell$, some other procedure might be handling $v$, that is, $v$ might be in the process of being raised or fallen level.
Notice that, if $v$ is being raised or fallen at some level $\ell'>\ell$, then there is no need for \riseschedule{} subscheduler to raise $v$ to $\ell'$.
The case where \riseschedule{} needs to raise $v$ to $\ell$ is when $\ell'<\ell$ (the destination level of $v$ at the process of raising or falling). 

First, we take care of the conflicts between subschedulers of type \riseschedule{}. 
Similarly to the case of the \unmatchschedule{} and \shuffleschedule{} subschedulers, we process the \riseschedule{} subschedulers in a sequence according to their levels and we assign to them (inactive and unassigned) vertices to rise, ensuring that each \riseschedule{} subscheduler at level $\ell$ does not raise the same vertex with one of a \riseschedule{} subschedulers at higher levels.

Other than conflicts between different \riseschedule{} subschedulers, the only other scheduler that might conflict with \riseschedule{} is \handlefree. 
In this case we distinguish conflicts of a \riseschedule{} subscheduler with calls \handlefree$(w)$, where $w$ is being raised, and  calls $\handlefree(w)$, where $w$ is being fallen.
As shown in \cite{charikar2018fully}, the conflicts between \riseschedule{} subscheduler and calls to the procedure \handlefree$(w)$ where $w$ is being raised are avoided as follows. Each level-$\ell$ \riseschedule{} subscheduler picks the subsequent vertex that it is going to raise and adds it into the set of active vertices, so that it cannot be processed by other schedulers. 
The vertex that is going to be raised with the next call to \riseschedule, is called the \emph{next-in-line} vertex of the corresponding subscheduler.
That is, each time a call to \riseschedule{} subscheduler is being initiated, it chooses the vertex that it is going to raise in the next call, and proceeds with raising the vertex that was chosen in the previous call.
It can be shown that this mechanism avoids conflicts between \riseschedule{} and procedure \handlefree, where \handlefree{} is processing a raising vertex.
The correctness is based on the fact that the call to level-$\ell$ \riseschedule{} subscheduler will end later than the call to level-$\ell'$ \handlefree{} procedure, where $\ell'<\ell$.
Moreover, because we schedule the different \riseschedule{} subschedulers in a decreasing order of their level, exactly as it is being done in \cite{charikar2018fully}, our distributed version does not affect their analysis.

Finally, we need to deal with the conflicts that arise between \riseschedule{} subschedulers and calls to procedure \handlefree$(w)$, where $w$ is in the process of falling.
This is a special case on its own, and is not resolved solely by the next-in-line mechanism discussed before, as the call to \handlefree$(w)$ may have been initiated from a level $\ell'>\ell$.
The first modification is that during a call to \handlefree$(w)$ we first check whether $w$ is the next-in-line vertex of any of the \riseschedule{} subschedulers at levels $j > \lvl{w}$, and if yes, we ignore the call to \handlefree$(w)$. 
This trick guarantees that there are no level-$j$ (where $j > \lvl{w}$) \riseschedule{} subscheduler attempts to raise $w$ while $w$ is falling from $\lvl{w}$ to a level $\ell$, as part of a call to \handlefree$(w)$.

It is possible that while $w$ is falling from $\lvl{w}$ to $\ell$, a level-$j$ \riseschedule{} subscheduler attempts to raise $w$ to level $j$.
The next-in-line trick does work here as the call to \handlefree$(w)$ requires more time than \riseschedule{} and hence it is not guaranteed that $w$ will be in a next-in-line for some \riseschedule{} subscheduler with level $\ell<j<\lvl{w}$.
We deal with this by preventing any level-$j$ \riseschedule{} subschedulers to raise $w$ to level $j$ while $w$ is falling for any $j<\lvl{w}$.
Although this guarantees that no \riseschedule{} subscheduler raises a falling vertex, the fact that we prevent the subscheduler to raise a vertex, might violate invariant (f), i.e., that for any vertex $v$ and any level $\ell'> \lvl{v}, \Phi_v(\ell') \leq \gamma^{\ell'} \cdot O(\log^2 n)$.
To ensure that this does not happen, right after $w$ falls to level $\ell$, we immediately raise to the highest level $\ell'$ that violates invariant (f).
It is shown in \cite{charikar2018fully} that this modification prevents the violation of invariant (f) and also the new version of \riseschedule{} subscheduler can be done within the time of the scheduler that initiated the falling of $w$.

\begin{theorem}
A $(2+\epsilon)$-approximate matching can be maintained fully-dynamically in the dynamic MPC model in $O(1)$ rounds per update, using $\widetilde{O}(1)$ active machines per round and $\widetilde{O}(1)$ communication per round, in the worst case. 
\end{theorem}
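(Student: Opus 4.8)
The plan is to take the fully-dynamic $(2+\epsilon)$-approximate (in fact almost-maximal) matching algorithm of Charikar and Solomon~\cite{charikar2018fully}, whose worst-case update time is $\text{polylog}(n)$, and to show that a single \emph{update cycle} of that algorithm --- which, across all subschedulers, issues one batch of $\hat{\Delta}\in\{\Delta,\Delta'\}=\text{polylog}(n)$ elementary operations per subscheduler together with $O(1)$ direct bookkeeping operations --- can be realized in $O(1)$ DMPC rounds with $\widetilde{O}(1)$ active machines and $\widetilde{O}(1)$ communication per round. The generic black-box reduction to centralized dynamic algorithms would already give a $\text{polylog}(n)$-round algorithm, so the point is precisely to push this down to $O(1)$ rounds. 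I would split the argument into a per-batch simulation lemma, the handling of the active list $\mathcal{A}$, and the resolution of the scheduler conflicts; correctness and the resource bounds would then follow from~\cite{charikar2018fully} essentially verbatim.

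\textbf{Per-batch simulation.} The key observation I would use --- already implicit in the descriptions of \setlevel{} and \handlefree{} above --- is that the $\hat{\Delta}$ elementary operations inside a single batch are mutually independent: each touches a distinct neighbor $w$ of the vertex $v$ being processed and updates only $\outset{w}$, the lists $\inset{w}{\cdot}$, the counters $\lowneighbors{w}{\cdot}$ and the symmetric entries of $v$, using solely $\lvl{v}$ (before and after the level change) and $\lvl{w}$. Hence I would simulate a whole batch by one round of $O(1)$-size messages from the machine holding $v$ to the $O(\hat{\Delta})$ machines holding the touched neighbors, followed by one round in which each such machine applies its local update. To route these messages I would reuse the bookkeeping of Section~\ref{sec:maximal-matching}: the coordinator maps ID ranges to machines, a light vertex lives entirely in one machine, and the $\hat{\Delta}$ edges of a heavy vertex touched in one batch span $O(\hat{\Delta})$ machines. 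Since there are $O(\log_\gamma n)$ subschedulers of each of the four types and only a constant number of types, running all of them in parallel in one update cycle would cost $\widetilde{O}(1)$ machines and communication per round. Edge insertions, deletions of non-matching edges, and the enqueueing of the two freed endpoints of a deleted matching edge into $\Q_{\lvl{u}}$ are immediate and clearly fit in $O(1)$ rounds.

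\textbf{Active list and conflicts.} By~\cite{charikar2018fully} we have $|\mathcal{A}| = O(\log n)$ with high probability, so I would implement the authentication process (refreshing the mutual information among the $O(\log n)$ active vertices, $O(\log^2 n)$ work) within a single batch, i.e.\ in $O(1)$ rounds with $\widetilde{O}(1)$ machines. For the four conflict types of Section~\ref{app:mat2} I would use short coordinator protocols: the sampling-mates conflict needs no action since \handlefree{} already samples from $S(v)\setminus\mathcal{A}$; for the delete-unmatched-edge conflict each \unmatchschedule{} and \shuffleschedule{} subscheduler would send its top $2\log n$ candidate edges to one machine which assigns distinct edges by sweeping the subschedulers in decreasing order of level; the match-an-active-vertex conflict would be removed by delaying all \freeschedule{} subschedulers by one round so that the newly active vertices can first be purged from the queues $\Q_\ell$; and the raising/falling conflicts would be handled by sequentially assigning inactive, unassigned vertices to the \riseschedule{} subschedulers by decreasing level, by the next-in-line reservation inside $\mathcal{A}$, and, for a falling vertex $w$, by aborting any \handlefree$(w)$ that targets a next-in-line vertex, forbidding any level-$j$ ($j<\lvl{w}$) \riseschedule{} from raising $w$ while it is falling, and immediately re-raising $w$ to the highest level that would violate invariant (f) once $w$ lands. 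Each of these is $O(1)$ rounds with $O(\log_\gamma n)$ messages of size $O(\log n)$, hence $\widetilde{O}(1)$ machines and communication.

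\textbf{Correctness, bounds, and the hard part.} Because every batch would be simulated with the same semantics as in~\cite{charikar2018fully}, and the conflict resolutions either coincide with theirs or are exactly the additional modifications they propose, invariants (a)--(f) are maintained at all times; invariants (a)--(d) make the matching maximal except at the temporarily free vertices, while (e)--(f) feed the probabilistic argument of~\cite{charikar2018fully} showing that only $\widetilde{O}(1)$ matched edges are ever ``hit'' within an update cycle, so the matching stays almost-maximal --- equivalently a $(2+\epsilon)$-approximate maximum matching (the standing assumptions that the update sequence has length $O(n^2)$ and the matching has size $\Omega(\log^5 n/\epsilon^4)$ are inherited and removable as in~\cite{charikar2018fully}). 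Each of the three ingredients contributes only $O(1)$ rounds and $\widetilde{O}(1)$ machines/communication per update cycle, giving the claimed worst-case bounds. The step I expect to be the main obstacle is the interaction between \riseschedule{} and \handlefree{} when the handled vertex is \emph{falling}: the next-in-line mechanism alone does not suffice because a \handlefree{} call initiated at a higher level runs longer than a \riseschedule{} call, so one must both prevent premature raises of the falling vertex and restore invariant (f) by an immediate re-raise, and one has to re-verify that this re-raise still fits within the time budget of the scheduler that initiated the fall --- the delicate accounting of~\cite{charikar2018fully} which our parallel execution of the subschedulers must leave intact.
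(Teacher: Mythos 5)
Your proposal is correct and follows essentially the same route as the paper: per-batch simulation of the $\hat{\Delta}$ operations in $O(1)$ rounds using their mutual independence, reuse of the Section~\ref{sec:maximal-matching} routing machinery, handling the authentication process and the four conflict types exactly as the paper does, and inheriting correctness and resource bounds from Charikar--Solomon. Your identification of the \riseschedule{}/\handlefree{} interaction on falling vertices as the delicate point matches the paper's own emphasis on that case.
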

\begin{proof}
As we argued with the description of each scheduler, we simulate the $\Delta$ or $\Delta'$ operations executed by each subscheduler in \cite{charikar2018fully} with $O(1)$ number of rounds, using $\widetilde{O}(1)$ active machines and $\widetilde{O}(1)$ communication.
Since the job done by the different subschedulers is independent among them, and there are only $O(\log n)$ of these subschedulers, it follows that the execution of all subschedulers in one update cycle can be executed in $O(1)$ rounds, using $\widetilde{O}(1)$ active machines and $\widetilde{O}(1)$ communication.
By the same argument, the authentication process at each update cycle for all subschedulers can be executed in the same time.
Finally, with analogous reasoning, it can be shown that the modifications needed to ensure that no conflicts arise can be executed within the same asymptotic complexity.
\end{proof}
\section{Simulating sequential dynamic algorithms with MPC algorithms}\label{app:red}

\begin{lemma}
	Assume that there exists a sequential algorithm $\mathcal{SA}$ for maintaining a solution to the problem $\mathcal{P}$ with polynomial preprocessing time $p(N)$, and update time $u(N)$, where the algorithm is either deterministic or randomized and the update time is amortized or %in the 
	worst-case. 
	There exists a DMPC algorithm $\mathcal{MRA}$ with $O(p(N))$ number of rounds for the preprocessing, and $O(u(N))$ number of rounds per update with $O(1)$ machines active per round. The DMPC algorithm is of same type as the sequential algorithm. 
\end{lemma}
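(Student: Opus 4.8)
The plan is to let one distinguished machine $M_{CPU}$ play the role of the processor of the word-RAM on which $\mathcal{SA}$ runs, and to use the remaining machines as a distributed random-access memory. Fix a block size $B=\Theta(\sqrt{N})$ and store the contents of RAM address $a$ in machine number $\lfloor a/B\rfloor$ at offset $a\bmod B$; this mapping is a trivial arithmetic computation, so $M_{CPU}$ needs to keep only $\mathcal{SA}$'s program together with its $O(1)$ registers and program counter, which occupy $O(\log N)$ bits and hence fit comfortably in $O(\sqrt{N})$. Since each word-RAM instruction reads and writes $O(1)$ words, a single step of $\mathcal{SA}$ is simulated as follows: $M_{CPU}$ sends an $O(1)$-size read request to the (unique) machine holding the word to be read, that machine replies with the word, $M_{CPU}$ performs the local word operation, sends an $O(1)$-size write-back if a word is modified, and updates its program counter. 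This takes $O(1)$ DMPC rounds, activates only $M_{CPU}$ and the addressed memory machine, and moves $O(1)$ data per round.

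First I would handle the preprocessing. Updates are by assumption delivered to a fixed coordinator machine, which we take to be $M_{CPU}$. Before simulating $\mathcal{SA}$'s preprocessing routine, one spends $O(p(N))$ rounds bringing the input — which the DMPC model may have placed arbitrarily across machines — into the canonical block layout one word at a time, again using $O(1)$ active machines and $O(1)$ communication per round (this is within budget since $p(N)=\Omega(N)$ whenever $\mathcal{SA}$ inspects all of its input, and in the degenerate cases relevant here, e.g.\ starting from the empty graph, the layout is built incrementally as the input arrives). Then one runs the step-by-step simulation of $\mathcal{SA}$'s preprocessing for its $p(N)$ steps, for a total of $O(p(N))$ rounds. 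Afterwards the data structure of $\mathcal{SA}$, and in particular the maintained solution, resides in the distributed RAM and is output collectively by the machines.

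Next I would handle an update. Each update (an edge insertion or deletion) arrives at $M_{CPU}$, which feeds it to the simulated $\mathcal{SA}$ and runs the step-by-step simulation of the corresponding update routine. If $\mathcal{SA}$ is randomized, $M_{CPU}$ draws the random bits it needs locally, so the execution of $\mathcal{MRA}$ is a faithful sample of the execution of $\mathcal{SA}$ and the two output distributions coincide; if $\mathcal{SA}$ is deterministic, so is $\mathcal{MRA}$. If $\mathcal{SA}$ has worst-case update time $u(N)$, its update routine performs $O(u(N))$ steps and $\mathcal{MRA}$ uses $O(u(N))$ rounds in the worst case; if $u(N)$ is amortized, any sequence of $k$ updates triggers $O(k\cdot u(N))$ steps in total, so $\mathcal{MRA}$ spends $O(k\cdot u(N))$ rounds, i.e.\ $O(u(N))$ amortized rounds per update. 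In all cases each round activates $O(1)$ machines and carries $O(1)$ communication, and the deterministic/randomized and amortized/worst-case character of $\mathcal{SA}$ is inherited verbatim.

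The obstacles here are bookkeeping rather than conceptual. The main point to make explicit is that a step of the underlying model (a word-RAM instruction) touches only $O(1)$ words, so that no simulated step needs more than $O(1)$ active machines even though $\mathcal{SA}$'s memory is spread over many of them; once the address-to-machine map is a fixed arithmetic function this is immediate. The second point is that I would perform the initial redistribution of the input into the canonical layout in $O(p(N))$ sequential-style rounds rather than via a parallel sort, in order to respect the $O(1)$ communication and $O(1)$ active-machine bounds. Finally, if one additionally insists on the model's $O(N^{1-\epsilon})$ bound on the \emph{total} number of machines, it suffices to note that this requires $\mathcal{SA}$ to use $O(N)$ space, which is the case for all the cited applications (connectivity, minimum spanning tree, maximal matching); otherwise we allocate $O(\sigma(N)/\sqrt{N})$ machines for the RAM, where $\sigma(N)$ is the (polynomial) space of $\mathcal{SA}$.
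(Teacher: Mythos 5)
Your proposal is correct and follows essentially the same route as the paper's proof: one distinguished machine plays the role of the CPU while the remaining machines serve as a distributed random-access memory, each sequential memory access is simulated in $O(1)$ rounds with $O(1)$ active machines and $O(1)$ communication, and the preprocessing/update times and the deterministic/randomized and amortized/worst-case characteristics carry over verbatim. The only cosmetic difference is that you use a flat arithmetic address-to-machine map with a fixed block size, whereas the paper describes per-data-structure bookkeeping (arrays stored in intervals, lists via machine-plus-offset pointers), but these are equivalent implementations of the same idea.
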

\begin{proof}
	For the reduction, we assume that the computation is taking place on a single machine $M_{\mathcal{MRA}}$ and the rest of the machines act as the main memory in the corresponding sequential algorithm.
	For each array-based data structure of the sequential algorithm, we allocate a single machine to keep track of how the data are distributed over the machines, i.e., the data structure allocates a minimum number of machines (up to a constant factor) and distributes the data in intervals such that a whole interval of the array lies on a single machine. 
	For each list-based data structure, similarly to the way the sequential algorithm stores a single link to the first element of the list we store only the machine storing the first element together with its position in the memory of the machine. Moreover, at the position of each element of the list we also store a pointer to the machine and position of the next element.
	For other type of data structures we could act similarly.
	For instance if a data structure is a list of a dynamically reallocated array-based data structure, then we could maintain the array-based data structures in as few machines as possible and allocate new machines whenever it is necessary (this is to ensure that we do not use too many machines).
	
	Whenever the algorithm that is executed on $M_{\mathcal{MRA}}$ requests access to an arbitrary position of an array-based data structure, then in a constant number of rounds this memory position is fetched and written back again (if the value has been changed) by only accessing a constant number of machines. In the case where the $\mathcal{MRA}$ algorithm requests access to an element of a list, it is required to specify a pointer to the machine and position of the element (in the beginning a pointer to the first element is specified, and as the list is scanned, the pointer to the next element is known by the algorithm).
	
	The complexity of a sequential algorithm is determined by the number of its accesses to the memory and also by arithmetic operations. Since each access to the memory by $\mathcal{SA}$ is simulated by a constant number of rounds by $\mathcal{MRA}$ with constant number of active machines per round, the running time of $\mathcal{SA}$ is translated to rounds of $\mathcal{MRA}$. 
	Therefore, the preprocessing time $p(N)$ and the update time $u(N)$ of the sequential algorithm can be simulated by $O(p(N))$ and $O(u(N))$ rounds, respectively, by the algorithm $\mathcal{MRA}$ with constant number of machines per round.
\end{proof}

\section{Discussion and Open Problems}
Although we believe that our model is a natural extension of the MPC model for dynamic algorithms, we acknowledge that the DMPC model has a few deficiencies. The main deficiency of the model is that it allows algorithms that during an update make use of a predefined set of machines (in the sense that the same machines are used during this update independently of the nature and content of the update), for instance, the algorithms that make use of a coordinator machine in some of the algorithms presented in this paper.
Such practices might create bottlenecks in the performance of algorithms, and even make the overall system more vulnerable to failures or malicious attacks.
This consideration can be taken into account by adding the following parameter to the DMPC model.
Assuming that the updates are executed uniformly at random from all possible updates at each round, we measure the entropy of the normalized distribution of the total communicated bits among the pairs of machines at each round. The higher the value of the metric the better the algorithm in terms of how uniformly distributed are the transmitted messages among the machines. We next further elaborate on this potential metric.

Consider a particular update round $r$, where the update happening is drawn uniformly at random from the set of all possible update that can potentially happen at round $r$.
Let $\phi:V\times V \rightarrow [C^2]$, where $C$ the total number of machines, be a mapping from pairs of machines to integers, and let $\alpha$ be the vector where $\alpha[\phi(i,j)]$ is the expected size message transmitted from machine $M_i$ to machine $M_j$ at round $r$, which depends on the update happening at round $r$.
For instance, an algorithm using a coordinator machine $M_c$ will have $\sum_{i\not=c}\alpha[\phi(c,i)] = \sqrt{N}$, and hence $M_c$ will be certainly activated and transmitting $\sqrt{N}$ bits in expectation.
Ideally, we would like the expected total communication to be equally distributed over the values $\alpha[\phi(i,j)]$.
This can be captured by the notion of entropy, defined over the normalized vector $\overline{\alpha}$, where $\sum_{i,j}\overline{\alpha}[\phi(i,j)] = 1$.
The entropy $H(\overline{\alpha})$ is maximized when the distribution of the unit value over the entries of $\overline{\alpha}$ is uniform, that is, when $\overline{\alpha}[\phi(i,j)] = 1/\ell$, where $\ell$ the length of $\overline{\alpha}$.
Note that the absolute value of the average value of $\alpha$ is upper bounded by the bound on total communication per round required by our model.
Intuitively, the measure of entropy that we consider quantifies the randomness in the set of machines that exchange messages, with respect to a random update.
For instance, when using a coordinator machine for an algorithm, then the communication is concentrated at the connection between the coordinator and the machine storing the updated elements (which is random), and hence the total entropy is limited. 

A second deficiency of our model is that the measure of total communication per round is somewhat ``coarse'' as it ignores the form of the messages exchanged, e.g., consider a round that uses $O(\sqrt{N})$ total communication and $O(\sqrt{N})$ active machines, in this case, the model does not distinguish between the case where each active machine transmits short messages of size $O(1)$  to another machine and the case where one machine transmits a large message of size  $O(\sqrt{N})$ and the rest of the machines small messages.
Notice that also this second deficiency can be taken into account by introducing the entropy-based metric that we used in the case of the first deficiency.

For the sake of simplicity of the model, we chose to avoid incorporating complicated metrics as parameters of the model in this initial paper introducing the model. However, we believe that this is an interesting direction for future work.arxaa

\paragraph{Open Problems.}
In this paper we initiated the study of algorithms in the DMPC model, by considering some very basic graph problems.
It is natural to study more problems in this setting as the MPC model becomes the standard model for processing massive data sets, and its limitations with respect to processing dynamically generated data are clear.
In general, we think it is of great value to understand the complexity of fundamental problems in the DMPC model, both in terms of upper and lower bounds.
We also believe that it would be interesting to establish connections with other models of computation, in order to develop a better understanding of the strengths and weaknesses of the DMPC model.

\bibliographystyle{abbrv}
\bibliography{bibliography}

\end{document}